\documentclass[11pt,letterpaper]{article}
\usepackage{fullpage}

\usepackage[inline]{enumitem}
\usepackage{amsfonts}
\usepackage{amsmath}
\usepackage{amsthm}
\usepackage{url,hyperref}
\usepackage{picins,boxedminipage}
\usepackage{amsfonts}
\usepackage{amssymb}
\usepackage{tikz}

\usepackage{array}

\usepackage{float}
\usepackage{graphicx}

\usepackage{makeidx}

\graphicspath{{graphics/}}

\definecolor{darkblue}{rgb}{0,0,0.38}
\definecolor{darkred}{rgb}{0.6,0,0}
\definecolor{darkgreen}{rgb}{0.1,0.35,0}
\hypersetup{colorlinks, linkcolor=darkblue, citecolor=darkgreen, urlcolor=darkblue}

\usetikzlibrary{matrix,fit,calc}
\usetikzlibrary{decorations.pathmorphing, decorations.pathreplacing}
\usetikzlibrary{shapes.misc}

\DeclareMathOperator{\conv}{conv}

\newenvironment{proofof}[1]{\noindent{\bf Proof of #1}:}{$\hfill \Box$\\}

\newtheorem{definition}{Definition}

\newtheorem{theorem}{Theorem}[section]
\newtheorem{lemma}[theorem]{Lemma}

\newtheorem{claim}{Claim}
\newtheorem{property}{Property}

\newcommand{\NP}{\textsc{NP}}
\newcommand{\RR}{\mathbb{R}}

\newcommand{\lamfam}{\mathcal{L}}

\newcommand{\lpa}{{\textrm{LPA}}}

\def\b1{{\bf 1}}

\begin{document}

\title{$k$-Trails: Recognition, Complexity, and Approximations}

\author{Mohit Singh\thanks{Microsoft Research, Redmond, USA. Email: \href{mailto:mohits@microsoft.com}{mohits@microsoft.com}\;.}\and Rico Zenklusen\thanks{ETH Zurich, Zurich, Switzerland. Email: \href{mailto:ricoz@math.ethz.ch}{ricoz@math.ethz.ch}\;.}}

\maketitle
\setcounter{footnote}{0}

\begin{abstract}
The notion of degree-constrained spanning
hierarchies, also called $k$-trails,
was recently introduced in the context
of network routing problems. They describe
graphs that are homomorphic images of
connected graphs of degree at most $k$.
First results highlight several
interesting advantages of $k$-trails compared
to previous routing approaches.
However, so far, only
little is known regarding computational
aspects of $k$-trails.

In this work we aim to fill this gap by
presenting how $k$-trails can be analyzed
using techniques from algorithmic matroid theory.
Exploiting this connection, 
we resolve several open questions
about $k$-trails.
In particular, we show that one
can recognize efficiently whether a graph is
a $k$-trail. Furthermore, we show that deciding
whether a graph contains a $k$-trail is NP-complete;
however, every graph that contains a $k$-trail
is a $(k+1)$-trail.
Moreover, further leveraging the connection
to matroids, we consider the problem of finding
a minimum weight $k$-trail contained in a graph
$G$. We show that
one can efficiently find a
$(2k-1)$-trail contained in $G$
whose weight is no more than
the cheapest $k$-trail contained in $G$,
even when allowing negative weights.

The above results settle several open questions
raised by M\'olnar, Newman, and Seb\H o.

\end{abstract}

\section{Introduction}

Motivated by applications in network routing,
the notion of
\emph{degree-constrained spanning hierarchies}
was introduced as a way
to obtain lower-degree routing structures as
what could be obtained with low-degree spanning
subgraphs~\cite{molnar_2014_new}.
These hierarchies, which, for brevity, have also
simply be called
\emph{$k$-trails}~\cite{sebo_2015_travelling,molnar_2015_travelling}, describe how a given
graph can be described as the homomorphic image
of another low-degree graph.
First results have already been reported
that show advantages of $k$-trails
compared to traditional methods in network routing
contexts~\cite{molnar_2014_new}.
However, many basic questions around $k$-trails
remained open, including whether one can efficiently
decide if a graph is a $k$-trail;
we refer the interested reader
to~\cite{sebo_2015_travelling} for
a nice overview of some open problems
around $k$-trails.
The goal of this work is to fill this gap by
answering several basic open questions
about $k$-trails, by revealing and exploiting
a connection to matroids.

Before giving a summary of our main results,
we start by formally defining
$k$-trails as well as some closely related
notions.
In particular, the notion of
homomorphic images introduced below
is the basis for defining $k$-trails.

Throughout this paper, we focus on undirected
connected graphs with possibly loops
and parallel edges, and with at least $2$
vertices to avoid trivial special cases.

\begin{definition}[homomorphic image]
A graph $G=(V,E)$ is the
\emph{homomorphic image} of a graph
$H=(W,F)$ if there is an onto function
$\phi: W \rightarrow V$ such that
for any two vertices $u,v\in V$
(with possibly $u=v$),
the number of edges in $G$ between
$u$ and $v$ is equal to the number
of edges in $H$ whose endpoints
get mapped by $\phi$ to $\{u,v\}$.
\end{definition}

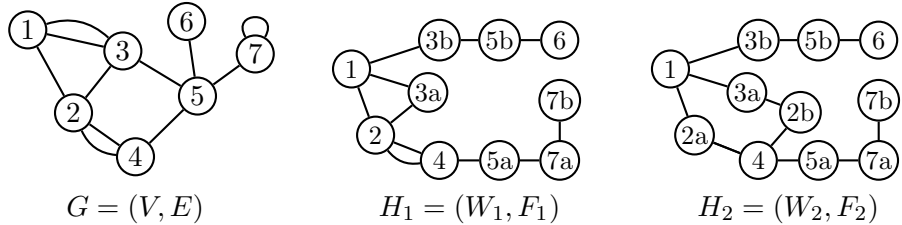
\begin{figure}[h]
\begin{center}
\begin{tikzpicture}[scale=0.8]

\tikzstyle{sn}=[thick, draw=black, fill=white, circle, minimum size=14,inner sep=1pt, font=\small]

\def\yt{-4.5}

\begin{scope}

\begin{scope}[every node/.style={sn}]
\node (1) at (1.86,-1.53) {};
\node (2) at (2.62,-2.91) {};
\node (3) at (3.44,-1.88) {};
\node (4) at (3.65,-3.64) {};
\node (5) at (4.68,-2.63) {};
\node (6) at (4.50,-1.39) {};
\node (7) at (5.64,-1.93) {};
\end{scope}

\begin{scope}
\foreach \i in {1,...,7} {
\node at (\i) {$\i$};
}
\end{scope}

\begin{scope}[thick]
\draw (1) -- (2);
\draw (1) to[bend left] (3);
\draw (1) -- (3);
\draw (2) -- (3);
\draw (2) to[bend right] (4);
\draw (2) -- (4);
\draw (3) -- (5);
\draw (4) -- (5);
\draw (5) -- (6);
\draw (5) -- (7);
\draw (7) to[out=60, in=120, looseness=4] (7);
\end{scope}

\coordinate (c1) at (4);

\end{scope}

\begin{scope}[shift={(5.7,0.3)}]

\begin{scope}[every node/.style={sn}]
\node  (1) at (1.54,-2.48) {1};
\node  (2) at (1.94,-3.58) {2};
\node  (3a) at (2.81,-2.87) {3a};
\node  (3b) at (3.00,-2.00) {3b};
\node  (4) at (3.00,-4.00) {4};
\node  (5a) at (4.00,-4.00) {5a};
\node  (5b) at (4.00,-2.00) {5b};
\node  (6) at (5.00,-2.00) {6};
\node  (7a) at (5.00,-4.00) {7a};
\node (7b) at (5.00,-3.00) {7b};
\end{scope}

\begin{scope}[thick]
\draw  (1) --  (2);
\draw  (1) --  (3a);
\draw  (1) --  (3b);
\draw  (2) --  (3a);
\draw  (2) --  (4);
\draw  (2) to[bend right] (4);
\draw  (3b) --  (5b);
\draw  (4) --  (5a);
\draw  (5a) --  (7a);
\draw  (5b) --  (6);
\draw  (7a) -- (7b);
\end{scope}

\coordinate (c2) at ($(4)!0.5!(5a)$);

\end{scope}

\begin{scope}[shift={(11,0.3)}]

\begin{scope}[every node/.style={sn}]
\node  (1) at (1.54,-2.48) {1};
\node  (2a) at (1.94,-3.58) {2a};
\node  (2b) at (3.7,-3.2) {2b};
\node  (3a) at (2.81,-2.87) {3a};
\node  (3b) at (3.00,-2.00) {3b};
\node  (4) at (3.00,-4.00) {4};
\node  (5a) at (4.00,-4.00) {5a};
\node  (5b) at (4.00,-2.00) {5b};
\node  (6) at (5.00,-2.00) {6};
\node  (7a) at (5.00,-4.00) {7a};
\node (7b) at (5.00,-3.00) {7b};
\end{scope}

\begin{scope}[thick]
\draw  (1) --  (2a);
\draw  (1) --  (3a);
\draw  (1) --  (3b);
\draw (3a) -- (2b);
\draw (2b) -- (4);
\draw  (2a) --  (4);
\draw  (2a) --  (4);
\draw  (3b) --  (5b);
\draw  (4) --  (5a);
\draw  (5a) --  (7a);
\draw  (5b) --  (6);
\draw  (7a) -- (7b);
\end{scope}

\coordinate (c3) at ($(4)!0.5!(5a)$);

\end{scope}

\path let \p1 = (c1) in node at (\x1,\yt) {$G=(V,E)$};
\path let \p1 = (c2) in node at (\x1,\yt) {$H_1=(W_1,F_1)$};
\path let \p1 = (c3) in node at (\x1,\yt) {$H_2=(W_2,F_2)$};

\end{tikzpicture}

\end{center}
\caption{The graph $G$ is the homomorphic image
of $H_1$ as well as $H_2$. The naming of the vertices
has been chosen to highlight the corresponding
homomorphisms, e.g., the nodes 3a and 3b in $H_1$
are both mapped to node 3 in $G$. $G$ is a $3$-trail,
because the homomorphic preimage $H_2$ has maximum
degree $3$.
}\label{fig:homImage}
\end{figure}

Figure~\ref{fig:homImage} shows an example
graph and 2 homomorphic preimages of it.

\smallskip

Hence, a preimage $H=(W,F)$ of $G$
corresponds to a graph obtained from
$G$ by splitting each of its vertices $v\in V$
into $|\phi^{-1}(v)|$ many copies. We therefore call
$|\phi^{-1}(v)|$ the \emph{$\phi$-multiplicity},
or simply \emph{multiplicity}, of $v$.

\begin{definition}[$k$-trail]
A graph $G=(V,E)$ is a $k$-trail if it is the homomorphic
image of a connected graph $H=(W,F)$ with
maximum degree at most $k$.
\end{definition}

The graph $G$ shown in Figure~\ref{fig:homImage} is
a $3$-trail since it is the homomorphic image
of $H_2$, which has maximum degree $3$.

It is not hard to see that $k$-trails can
equivalently be defined as preimages of
trees of degree at most $k$.
\begin{definition}[$k$-tree]
A graph is a $k$-tree if it is a spanning
tree of maximum degree at most $k$.
\end{definition}

\begin{lemma}\label{lem:splitIntoTree}
If $G$ is a $k$-trail then it is a homomorphic image of a $k$-tree.
More precisely, given a connected graph $H=(W,F)$ and onto function
$\phi:W\rightarrow V$ such that $G$ is the homomorphic image of
$H$ by $\phi$, we can construct efficiently a tree $H'=(W',F')$ and
onto function $\phi':W'\rightarrow W$ such that $H$ is the
homomorphic image of $H'$ by $\phi'$.
Thus, $G$ is the $\phi' \circ \phi$-homomorphic image of $H'$, and
for $v\in V$, the $\phi' \circ \phi$-multiplicity of $v\in V$
is at least the $\phi$-multiplicity of $v$.
\end{lemma}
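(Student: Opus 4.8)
The plan is to obtain $H'$ from $H$ by repeatedly \emph{splitting} a single vertex, each time destroying one cycle while keeping the graph connected and never increasing a degree. Recall that if a map $\psi$ exhibits a graph $A$ as the homomorphic image of a graph $B$, then $B$ arises from $A$ by replacing each vertex $a$ with $|\psi^{-1}(a)|$ copies and distributing the edges incident to $a$ among these copies (the $a$-endpoint of each such edge is reassigned to one of the copies). The two invariants I would record right away are that splitting preserves the total number of edges, and that it can only \emph{decrease} degrees, since the degree of $a$ equals the sum of the degrees of its copies. In particular, any preimage built this way has maximum degree at most that of $H$.

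The elementary step is the following. If the current graph is connected but not a tree, pick an edge $e$ lying on a cycle, equivalently a non-bridge, with endpoints $a$ and $b$ (possibly $a=b$ if $e$ is a loop). Introduce a fresh vertex $w_e$, set $\phi'(w_e)=b$, and reassign the $b$-endpoint of $e$ to $w_e$, so that $e$ becomes a pendant edge $\{a,w_e\}$ with $\deg(w_e)=1$. This is a legitimate split of $b$: one copy keeps all the other edges incident to $b$, while $w_e$ is a new copy carrying only $e$. The single point where I expect the real argument to lie is verifying that connectivity is preserved: because $e$ is a non-bridge, deleting it leaves $a$ and $b$ in the same component, and re-attaching it as the pendant $\{a,w_e\}$ therefore keeps the whole graph, including $w_e$, connected. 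Since we have turned a cycle edge into a bridge while adding exactly one vertex and keeping the edge count fixed, the first Betti number $|F|-|W|+1$ drops by exactly one.

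To make the process terminate and run efficiently, I would fix a spanning tree $T$ of $H$ once and for all and apply the step precisely to the non-tree edges. Each non-tree edge is a non-bridge, as it closes a cycle with $T$; and because the step only ever reattaches endpoints of non-tree edges to new leaves, every yet-unprocessed non-tree edge still has both its endpoints among the original vertices of $W$, which remain connected through the untouched tree $T$, so it is still a valid (non-bridge) choice when its turn comes. After processing all $|F|-|W|+1$ of them, the result $H'$ is exactly $T$ with one pendant leaf hung off for each non-tree edge; this is a tree, and the number of splits is polynomial, so the construction is efficient. Composing the resulting $\phi'\colon W'\to W$ with $\phi$ then realizes $G$ as the homomorphic image of the tree $H'$; by the degree invariant above, if $H$ has maximum degree at most $k$ then so does $H'$, which is the promised $k$-tree and establishes the first sentence. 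Finally, since every $w\in W$ has at least one preimage under $\phi'$ (namely $w$ itself), for each $v\in V$ we obtain $|(\phi'\circ\phi)^{-1}(v)| = \sum_{w\in\phi^{-1}(v)} |\phi'^{-1}(w)| \ge |\phi^{-1}(v)|$, which is the claimed monotonicity of multiplicities.
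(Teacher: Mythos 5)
Your proof is correct and takes essentially the same approach as the paper: your elementary step of detaching one endpoint of a non-bridge (cycle) edge onto a fresh degree-one vertex mapped to that endpoint is exactly the paper's splitting step, with the same verification that connectivity, the homomorphic-image property, and the maximum-degree bound are preserved. The only cosmetic difference is your termination argument via a fixed spanning tree $T$ (processing the $|F|-|W|+1$ non-tree edges), whereas the paper argues termination by noting that each split adds a vertex while keeping the edge count fixed, so the process must end at a tree.
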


\begin{proof}
Let $G$ be a homomorphic image of the connected graph $H=(W,F)$ with maximum degree $k$. We show how vertices of $H$ can be split step-by-step to arrive at $H'$. We only show a single splitting step transforming $H$ into $\bar{H}=(\bar{W},\bar{F})$. Assume that $H$ contains a cycle, say $C$; otherwise we can set $H'=H$.
Let $\{w_1,w_2\}$ be an edge in $C$. Let $\bar{W}=W\cup \{x\}$ where $x$ is a new vertex. Let $\bar{H}=(\bar{W},\bar{F})$ where $\bar{F}=F\cup\{\{x,w_1\}\}\setminus\{\{w_1,w_2\}\}$ and $\bar{\phi}:\bar{W}\rightarrow W$
where $\bar{\phi}(w)=w$ if $w\in W$ and $\bar{\phi}(x)=w_2$. Clearly $\bar{H}$ is connected with maximum degree $k$ and $H$ is a homomorphic image of $\bar{H}$ by $\bar{\phi}$. Since $\bar{H}$ has one more vertex than $H$, repeatedly applying this procedure will stop as soon as we split $G$ into a connected graph with $|E|-1$ vertices, in which case it has to be a tree $H'$.
\end{proof}

Many basic questions on $k$-trails remained
open. This includes the complexity status
of deciding whether a graph is a $k$-trail
for a given $k$, an open question raised
in~\cite{sebo_2015_travelling}.
Further interesting open questions
on $k$-trails that are motivated by
routing applications are linked to the
notion of whether a graph \emph{contains} a
$k$-trail, which is defined as follows.

\begin{definition}[containing a $k$-trail]
We say that a graph $G=(V,E)$ \emph{contains
a $k$-trail} if there is a set $U\subseteq E$
such that $G'=(V,U)$ is a $k$-trail.
\end{definition}

Notice that all $k$-trails are connected graphs,
since they are homomorphic images of connected
graphs. Hence, candidate edge sets $U\subseteq E$,
for $G=(V,E)$ to contain a $k$-trail $(V,U)$,
must be such that $(V,U)$ is connected.

Regarding the containment of $k$-trails, many
interesting questions remained
open~\cite{sebo_2015_travelling}.
This includes the complexity status
of deciding whether a graph contains a
$k$-trail, and questions related to
approximation algorithms for finding
minimum weight $k$-trails. In particular,
it was conjectured in~\cite{sebo_2015_travelling}
that for any nonnegative edge
weights $w:E\rightarrow \mathbb{Z}_{\geq 0}$
and any $k\geq 3$, there exists a polynomial
algorithm returning a $(2k-2)$-trail in
$G$ whose cost is not larger than the minimum
weight $k$-trail in $G$, if $G$ contains
a $k$-trail.

In this paper we are able to settle most
of the above-mentioned open questions, by
presenting a new viewpoint on $k$-trails in
terms of matroids.

\subsection{Our results}

One of our main results, whose derivation
will also be used to highlight a strong
link between $k$-trails and matroids,
is the fact that $k$-trails can be recognized
efficiently.

\begin{theorem}\label{thm:recognition}
Given a graph $G=(V,E)$ and $k\in \mathbb{Z}_{>0}$,
it can be checked efficiently whether $G$
is a $k$-trail, and if so, obtain a $k$-tree
$H=(W,F)$ and onto function $\phi:W\rightarrow V$
such that $G$ is the homomorphic image of $H$
by $\phi$.
\end{theorem}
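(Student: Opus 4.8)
The plan is to reduce the recognition question to a feasibility test in a suitable matroid, which is the structural heart of the paper's approach. By Lemma~\ref{lem:splitIntoTree}, $G$ is a $k$-trail if and only if it is the homomorphic image of some $k$-tree, so I would work directly with spanning trees of $G$. The key observation is the degree bookkeeping under vertex splitting: when a vertex $v\in V$ of degree $d_G(v)$ is split into $\phi^{-1}(v)$ copies forming a connected preimage $H$, those copies together carry all $d_G(v)$ incident edge-endpoints plus the internal edges joining the copies. Since the preimage $H'$ we seek is a tree with $|E|$ vertices (one per edge of $G$, as in the proof of Lemma~\ref{lem:splitIntoTree}, or more naturally a tree on a vertex set of total size $\sum_v |\phi^{-1}(v)|$), the copies of $v$ must form a subtree, and the number of copies needed is governed by how the $d_G(v)$ endpoints are distributed among copies of degree at most $k$. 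Concretely, if $v$ is split into $m_v$ copies arranged in a subtree, those copies use $2(m_v-1)$ endpoints internally, leaving $d_G(v)$ endpoints to the rest of the graph, and feasibility of the degree bound $k$ at each copy translates into the inequality $d_G(v) \le m_v(k-2)+2$, i.e.\ a minimum required multiplicity $m_v \ge \lceil (d_G(v)-2)/(k-2)\rceil$ (with the appropriate special handling when $k=2$ or $k=1$).

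Next I would translate the global consistency condition into a matroid statement. After fixing the minimum multiplicities $m_v$, the question becomes whether one can orient/assign each edge of $G$ to the copies so that every copy gets degree at most $k$ and the resulting split graph is connected and cycle-free, i.e.\ a spanning tree. The natural approach is to set up a \emph{matroid intersection} or \emph{matroid union} instance: the graphic matroid of $G$ (or of an auxiliary multigraph recording the split) captures the spanning-tree/forest structure, while a partition matroid captures the per-copy degree constraints of at most $k$. I would then invoke the fact that a common independent set, or a base in the relevant matroid union, of the right size exists precisely when $G$ admits a $k$-tree preimage. Because matroid intersection and matroid union admit efficient (polynomial) algorithms, a positive answer to the feasibility test is not only decidable in polynomial time but \emph{constructive}: the returned combinatorial object directly yields the tree $H'$ and the onto map, which we compose as in Lemma~\ref{lem:splitIntoTree} to produce the desired $k$-tree $H$ and homomorphism $\phi:W\rightarrow V$.

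The step I expect to be the main obstacle is formulating the \emph{precise} matroid whose independence exactly mirrors ``homomorphic image of a $k$-tree.'' The subtlety is that we are not merely selecting a subset of existing edges: splitting a vertex creates choices about which incident edges attach to which copy, and these choices must simultaneously respect connectivity of $G$ (the copies of $v$ must be linked into the single tree) and the degree cap $k$ at every copy. Encoding ``connected after splitting'' and ``degree at most $k$ per copy'' as the two matroids of an intersection, in a way that counts multiplicities correctly and remains faithful in the boundary cases (loops at $v$, parallel edges, $k=2$), is where the real work lies; an off-by-one in the degree accounting or a failure to enforce connectivity of each vertex-group would break the equivalence. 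Once the matroid model is correct, the forward and backward directions of the equivalence, together with the polynomial-time guarantee, follow from standard matroid algorithmics.
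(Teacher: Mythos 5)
Your high-level plan (reduce recognition to a graphic-matroid/partition-matroid feasibility test on an auxiliary structure) is the paper's strategy in spirit, but two load-bearing steps are wrong or missing. First, your degree bookkeeping contradicts the definition of homomorphic image. Copies of $v$ in a preimage $H$ are \emph{not} joined by internal edges: an edge of $H$ with both endpoints in $\phi^{-1}(v)$ would map to a loop at $v$ in $G$, and the definition forces the number of such edges to equal the number of loops of $G$ at $v$ --- hence zero when $v$ has no loop. So the copies of $v$ do not ``form a subtree,'' there is no $2(m_v-1)$ internal-endpoint term, and the correct multiplicity requirement is $m_v \ge \lceil \deg_G(v)/k\rceil$ (Lemma~\ref{lem:recToMult}), not $m_v \ge \lceil (\deg_G(v)-2)/(k-2)\rceil$. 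Connectivity of $H$ must come entirely from edges of $G$ incident to \emph{other} vertices, which is exactly why it is a nontrivial global constraint: for instance, the center of a star can never be split at all, since its copies would be in different components, so a star is a $k$-trail only if its center has degree at most $k$.

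Second, the step you yourself flag as ``where the real work lies'' --- simultaneously encoding ``degree at most $k$ per copy'' and ``connected after splitting'' as a matroid intersection --- is precisely what you have not supplied, and the paper resolves it differently from what you sketch: it never encodes per-copy degree caps into the matroids at all. Instead, Lemma~\ref{lem:balancing} shows that once $v$ has $\lambda(v)$ copies in a tree preimage, the tree can be rebalanced so that every copy of $v$ has degree at most $\lceil \deg_G(v)/\lambda(v)\rceil$; hence only the \emph{multiplicities} matter, and recognition reduces to feasibility of the single vector $\lambda(v)=\lceil \deg_G(v)/k\rceil$. Feasibility of a multiplicity vector is then tested on an auxiliary graph $G'$: one vertex per edge-endpoint of $G$, a forced edge in $\bar E$ for each edge of $G$, and a clique $K_v$ on the endpoints at $v$. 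Spanning trees $T\supseteq \bar E$ of $G'$ correspond exactly to tree preimages of $G$ (contract the components of $(V'_v,T\cap K_v)$), with the multiplicity of $v$ equal to the number of those components. The test is then a genuine matroid intersection: find a base of the graphic matroid of $G'/\bar E$ that is independent in the partition matroid with capacity $\deg_G(v)-1-\mu(v)$ on each part $K_v$ --- note that this partition matroid bounds multiplicities, not copy degrees; the degree bound $k$ is recovered only afterwards by the balancing lemma. Without these two ideas (balancing, and the auxiliary-graph correspondence), your outline does not close into a proof.
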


Contrary to the recognition problem, the
containment problem is hard.

\begin{theorem}\label{thm:hardness}
For any $k\geq \mathbb{Z}_{\geq 2}$, the problem of
deciding whether a graph contains
a $k$-trail is \NP-complete.
\end{theorem}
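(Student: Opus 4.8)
The plan is to show membership in \NP{} first, and then to prove hardness through a reduction organized in two layers: a uniform pendant-leaf construction that reduces the problem for an arbitrary $k\ge 2$ to the case $k=2$, followed by a Hamiltonicity-based reduction for that base case.

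Membership in \NP{} is immediate from Theorem~\ref{thm:recognition}: a candidate edge set $U\subseteq E$ is a polynomial-size certificate, and one verifies in polynomial time that $(V,U)$ is a $k$-trail. For the reduction I would record the counting inequality behind $k$-trails: if $G=(V,E)$ is a $k$-trail then, by Lemma~\ref{lem:splitIntoTree}, it is the image of a tree on $|E|+1$ vertices in which each of the $\lvert\phi^{-1}(v)\rvert$ copies of $v$ has degree at most $k$, whence
\[
  \sum_{v\in V}\bigl\lceil \deg_G(v)/k\bigr\rceil \;\le\; |E|+1 .
\]
Applied to $(V,U)$ this gives a convenient necessary condition for $G$ to contain a $k$-trail: there must be a connected spanning $U\subseteq E$ with $\sum_v\lceil\deg_U(v)/k\rceil\le|U|+1$.

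The first layer fixes a subcubic graph $G_0=(V_0,E_0)$ and attaches $k-2$ pendant leaves to every vertex, producing $G$ (so $G=G_0$ when $k=2$). In any connected spanning subgraph of $G$ every leaf-edge is forced, so writing $U_0:=U\cap E_0$ and using that $G_0$ is subcubic, the term $\lceil(\deg_{U_0}(v)+k-2)/k\rceil$ equals $1$ when $\deg_{U_0}(v)\le 2$ and $2$ when $\deg_{U_0}(v)=3$. Substituting into the inequality above and simplifying by the handshake identity, the whole condition collapses to the statement that $(V_0,U_0)$ has at most two odd-degree vertices. I would turn this into an exact equivalence without appealing to any converse characterization: for the ``only if'' direction the displayed inequality (its easy necessity) already forces at most two odd vertices in $U_0$, so the connected $(V_0,U_0)$ has an Eulerian trail, i.e.\ is a $2$-trail; for the ``if'' direction, given a $2$-trail of $(V_0,U_0)$ one builds an explicit $k$-tree preimage of $(V,U)$ by taking the path preimage of the Eulerian trail and hanging the $k-2$ leaves of each $v$ on a single copy of $v$, which keeps every degree at most $k$. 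Thus, for every $k\ge 2$, $G$ contains a $k$-trail if and only if the subcubic graph $G_0$ contains a $2$-trail.

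It remains to prove that deciding whether a subcubic graph contains a $2$-trail --- equivalently, admits a connected spanning subgraph with at most two odd vertices, a spanning trail --- is \NP-hard, which I would obtain by reduction from Hamiltonian cycle in cubic graphs. In a cubic graph such a subgraph is nearly a Hamiltonian cycle: if all degrees equal $2$ it is exactly one, and the two permitted odd vertices are the only slack. I would therefore replace each degree-$3$ vertex by a subcubic connector gadget that any spanning trail must traverse as a through-path (so that using all three incident edges is impossible) and add a constant-size gadget that consumes the two odd-vertex slots, forcing the remaining subgraph to be $2$-regular. The heart of the argument, and the step I expect to be hardest, is the correctness of this base reduction: I must rule out every way in which the two allowed odd vertices, in concert with extra edges that enlarge the cyclomatic ``splitting budget'' $|U|-|V|+1$, might satisfy the counting inequality cheaply without encoding a Hamiltonian cycle. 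Verifying that the gadgets admit only the intended through-traversals, and that no alternative cycle-rich subgraph slips through, is where the technical work concentrates.
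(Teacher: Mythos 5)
Your \NP-membership argument and your first layer are sound: attaching $k-2$ pendant leaves to each vertex of a subcubic $G_0$ and collapsing the counting inequality $\sum_{v}\lceil \deg_U(v)/k\rceil \le |U|+1$ to ``$(V_0,U_0)$ has at most two odd-degree vertices'' is correct, and it yields the equivalence that the padded graph $G$ contains a $k$-trail if and only if $G_0$ contains a $2$-trail, uniformly for all $k\ge 2$. This is the same padding construction the paper uses for $k\ge 3$, though your proof of its correctness (necessity of the counting bound plus Euler's theorem, and the explicit path-plus-leaves preimage for the converse) differs from the paper's structural argument about leaves and degrees in the preimage, and is arguably cleaner.

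The genuine gap is the base case $k=2$. You propose to reduce Hamiltonian \emph{cycle} in cubic graphs to ``subcubic graph contains a spanning trail'' via unspecified connector gadgets plus a gadget consuming the two odd-vertex slots, and you yourself flag that the correctness of this reduction --- ruling out lollipops, thetas, dumbbells, and other cycle-rich spanning subgraphs with at most two odd vertices that do not encode a Hamiltonian cycle --- is exactly the part you have not done. As written, the hardness claim rests on an unconstructed and unverified reduction (and you cite no known \NP-completeness result for spanning trails that could substitute for it), so the proof is incomplete. The detour is also unnecessary: the paper's base case is a short pruning argument showing that in a \emph{cubic} graph a $2$-trail contained in it with a minimum number of edges must have all multiplicities equal to $1$ --- if some $\phi^{-1}(v)$ has size at least $2$ at a degree-$3$ vertex, one of its copies is a leaf of the path preimage and can be deleted together with its edge, contradicting minimality --- hence is a Hamiltonian path. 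Therefore a cubic graph contains a $2$-trail if and only if it has a Hamiltonian path, and one reduces from Hamiltonian \emph{path} in cubic graphs, which is \NP-complete. Your own first layer, instantiated with $G_0$ cubic, combines with this equivalence to finish the proof; equivalently, in your language, a connected spanning subgraph of a cubic graph with at most two odd vertices can always be pruned to a Hamiltonian path by deleting one cycle edge at each degree-$3$ vertex. Without some argument of this kind, your reduction does not establish hardness.
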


Despite the different complexity status
of the containment and recognition question,
the following theorem shows that they
are closely related.

\begin{theorem}\label{thm:containToBeing}
If $G$ contains a $k$-trail then it is
a $(k+1)$-trail.
\end{theorem}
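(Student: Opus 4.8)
The plan is to reduce the statement to a clean degree-counting characterization of trails and then prove a short per-vertex inequality. Concretely, I will use the fact (implicit in the recognition result, Theorem~\ref{thm:recognition}, and also provable via Nash-Williams' detachment theorem) that a connected graph $G=(V,E)$ is a $j$-trail if and only if $\sum_{v\in V}\lceil \deg_G(v)/j\rceil \le |E|+1$. The ``only if'' direction I can get for free: by Lemma~\ref{lem:splitIntoTree} a $j$-trail is the homomorphic image of a $j$-tree $H$, which is a tree with exactly $|E|+1$ vertices (it has $|E|$ edges); since the copies of a vertex $v$ have degrees summing to $\deg_G(v)$ with each copy of degree at most $j$, vertex $v$ must have at least $\lceil \deg_G(v)/j\rceil$ copies, and summing the copy counts gives $\sum_v \lceil\deg_G(v)/j\rceil\le |E|+1$. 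The ``if'' direction is the substantive one and is where the detachment machinery enters.

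With this characterization in hand, the proof becomes arithmetic. Suppose $G=(V,E)$ contains the $k$-trail $(V,U)$ with $U\subseteq E$, and write $R=E\setminus U$. For $v\in V$ set $u_v=\deg_{(V,U)}(v)$ and $r_v=\deg_{(V,R)}(v)$, so that $\deg_G(v)=u_v+r_v$. Applying the easy ``only if'' direction to the $k$-trail $(V,U)$ yields $\sum_v\lceil u_v/k\rceil\le |U|+1$. Moreover, since $(V,U)$ is connected (every trail is) and has at least two vertices, every vertex satisfies $u_v\ge 1$; note also that $G$ is connected because it contains the connected spanning subgraph $(V,U)$. The goal is then exactly to show $\sum_v\lceil \deg_G(v)/(k+1)\rceil\le |E|+1=(|U|+1)+|R|$.

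The heart of the argument is the per-vertex inequality
\[
\left\lceil \frac{u_v+r_v}{k+1}\right\rceil \;\le\; \left\lceil \frac{u_v}{k}\right\rceil + \left\lfloor \frac{r_v}{2}\right\rfloor,
\]
valid for all integers $u_v\ge 1$, $r_v\ge 0$, and $k\ge 1$. To see it, write $r_v=2q+s$ with $s\in\{0,1\}$ and use subadditivity of the ceiling: the left-hand side is at most $\lceil (u_v+s)/(k+1)\rceil+\lceil 2q/(k+1)\rceil$. Here $\lceil 2q/(k+1)\rceil\le q$ since $k+1\ge 2$, and a short check (using $u_v\ge 1$) gives $\lceil (u_v+s)/(k+1)\rceil\le \lceil u_v/k\rceil$ for $s\in\{0,1\}$: if $\lceil u_v/k\rceil=t$ then $u_v\le tk$, hence $u_v+1\le tk+1\le t(k+1)$ because $t\ge 1$. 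Summing the per-vertex inequality over $v$ and using $\sum_v\lfloor r_v/2\rfloor\le \tfrac12\sum_v r_v=|R|$ together with $\sum_v\lceil u_v/k\rceil\le |U|+1$ gives $\sum_v\lceil \deg_G(v)/(k+1)\rceil\le(|U|+1)+|R|=|E|+1$. The ``if'' direction of the characterization then certifies that $G$ is a $(k+1)$-trail.

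I expect the main obstacle to be the ``if'' direction of the characterization, i.e.\ showing that the degree bound $\sum_v\lceil\deg_G(v)/(k+1)\rceil\le|E|+1$ actually guarantees a \emph{connected} degree-$(k+1)$-bounded detachment; pure degree counting does not see connectivity, so this step must come from the constructive machinery behind Theorem~\ref{thm:recognition} (or from Nash-Williams' detachment theorem). Two smaller points deserve care in the counting: it is essential to bound by $\lfloor r_v/2\rfloor$ rather than $\lceil r_v/2\rceil$, since only the floor sums to at most $|R|$ (each $R$-edge contributes $2$ to $\sum_v r_v$); and the hypothesis $u_v\ge 1$ coming from connectivity of $(V,U)$ is genuinely needed, as the per-vertex inequality fails when $u_v=0$.
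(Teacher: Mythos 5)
Your argument rests on the claimed characterization that a connected graph $G=(V,E)$ is a $j$-trail if and only if $\sum_{v\in V}\lceil\deg_G(v)/j\rceil\le|E|+1$. The ``only if'' direction and your per-vertex inequality are both correct, but the ``if'' direction --- exactly the step you defer to ``the constructive machinery behind Theorem~\ref{thm:recognition} (or Nash-Williams' detachment theorem)'' --- is false for every $j\ge 3$, so the obstacle you flagged is not a technicality but fatal. Counterexample: fix $j\ge 3$ and let $G$ consist of a triangle $\{x,y,z\}$, an edge $\{x,c\}$, and $j$ pendant leaves attached to $c$. Then $|V|=|E|=j+4$, $\deg(c)=j+1$, and every other degree is at most $3\le j$, so $\sum_v\lceil\deg_G(v)/j\rceil=(j+3)+2=|E|+1$ and your counting condition holds (with equality). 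Yet $G$ is not a $j$-trail: a connected preimage has $|E|$ edges and hence at most $|E|+1=|V|+1$ vertices, so exactly one vertex of $G$ may be split, into exactly two copies; since $\deg(c)=j+1>j$, that vertex must be $c$; but the $j+1$ neighbors of $c$ lie in pairwise distinct components of $G-c$, so however the edges at $c$ are distributed between the two copies, the preimage is disconnected (or has an isolated copy). Degree counting simply cannot see connectivity of the detachment --- this is precisely why the paper needs the polymatroid of Theorem~\ref{thm:polym} and the matroid-intersection feasibility conditions (in effect one inequality per vertex subset, cf.\ Property~\ref{prop:eqAux}), and why Nash-Williams' connected-detachment theorem likewise imposes conditions over all subsets rather than one global count. (For $j=2$ your characterization is true --- it is Euler's theorem --- which may be what suggested it.)

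Note also that your route cannot be repaired by refining the count: my example satisfies the $3$-trail count but is not a $3$-trail (consistently with Theorem~\ref{thm:containToBeing}, it also contains no $2$-trail), so any correct proof must use the hypothesis ``$G$ contains a $k$-trail'' more strongly than through the degree inequality it implies. That is what the paper's proof does: it takes a $k$-trail $(V,U)$ contained in $G$ with a \emph{maximum} number of edges, shows $G\setminus U$ is acyclic (otherwise a cycle of $G\setminus U$ could be absorbed into the trail after rebalancing via Lemma~\ref{lem:balancing}, contradicting maximality), and then inserts the remaining forest edges one by one, always picking a leaf edge $\{u,v\}$ of the current forest, creating a fresh copy of $v$ in the preimage and letting the degree of a copy of $u$ rise from $k$ to $k+1$ only at the moment $u$ becomes isolated in what remains --- an explicitly structural, constructive argument rather than a counting one.
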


In particular, the above theorem implies that
if a graph $G$ contains a $k$-trail, we
can efficiently find a $(k+1)$-trail contained
in $G$, namely the graph itself. Using the
fact that recognition is polynomial time solvable,
we can thus find the smallest $k$ for
which a given graph $G$ is a $k$-trail, which
then implies by Theorem~\ref{thm:containToBeing}
that the smallest $k'$ for which $G$ contains
a $k'$-trail is either $k$ or $k-1$.

Finally, we obtain the following result
on the containment of weighted $k$-trails.

\begin{theorem}\label{thm:weighted}
There exists a polynomial time algorithm that,
given a graph $G=(V,E)$ with weight
function $w:E\rightarrow \mathbb{Z}$
and an integer $k\geq 2$, either shows that there
is no $k$-trail contained in $G$ or
returns a $(2k-1)$-trail contained in $G$ whose
total weight is at most the weight of any
$k$-trail contained in $G$.
\end{theorem}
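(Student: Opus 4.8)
The plan is to translate the $k$-trail condition into matroid language, optimize over a relaxation using the fact that weighted matroid problems can be solved exactly even with negative weights, and pay for the integrality gap by enlarging the degree bound from $k$ to $2k-1$. I would start from the counting characterization underlying the recognition result (Theorem~\ref{thm:recognition}): a connected spanning subgraph $(V,U)$ is a $k$-trail if and only if $\sum_{v\in V}\lceil \deg_U(v)/k\rceil \le |U|+1$. Writing $\lceil d/k\rceil = 1 + \lfloor (d-1)/k\rfloor$ and using that $(V,U)$ is spanning, this is equivalent to $\sum_{v\in V}\lfloor(\deg_U(v)-1)/k\rfloor \le |U|-|V|+1$; that is, the total number of \emph{extra copies} needed to split every vertex into pieces of degree at most $k$ is at most the cyclomatic number $\mu(U)=|U|-|V|+1$. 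The detachment viewpoint of Lemma~\ref{lem:splitIntoTree} makes this exact: a connected detachment into a tree of maximum degree $k$ exists precisely when these extra copies can be charged, in a Nash--Williams/transversal fashion, to the independent cycles of $(V,U)$.

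Using this, I would phrase ``minimum weight $k$-trail contained in $G$'' as choosing a connected spanning $U$ (equivalently a spanning tree together with a set $J$ of surplus edges, so that $\mu(U)=|J|$) of minimum weight $w(U)$ subject to the charging condition above. Since deciding containment is already \NP-complete (Theorem~\ref{thm:hardness}), we cannot optimize this condition exactly, so I would relax it to a weighted matroid intersection problem on the ground set $E$: intersect the graphic matroid of $G$, which controls connectivity and the underlying spanning tree, with a second (transversal/partition) matroid that encodes, for each vertex, how many surplus edges are available to absorb its extra copies. Weighted matroid intersection is solvable in polynomial time for arbitrary integer weights, which is exactly what is needed since $w$ may be negative; moreover, infeasibility of the intersection certifies that no $k$-trail is contained in $G$, giving the first alternative of the statement.

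The approximation factor enters when converting an optimal relaxed solution into a genuine detachment. The relaxed optimum only yields a fractional (essentially half-integral) charging of extra copies to surplus edges: each surplus edge contributes two endpoint-incidences, so $J$ offers $2|J|$ incidences but only $|J|=\mu(U)$ genuine units of cycle budget, and an integral charging valid for pieces of degree $k$ cannot in general be recovered without losing a factor of (up to) two. I would compensate by enlarging the allowed piece size from $k$ to $2k-1$: doubling the piece size roughly halves the number of extra copies that must be charged, with the $-1$ absorbing the rounding, so that the rounded charging becomes integral and feasible against $\mu(U)$. The rounded object then satisfies $\sum_{v}\lceil \deg_U(v)/(2k-1)\rceil \le |U|+1$ and, being connected, is a $(2k-1)$-trail; I would invoke Lemma~\ref{lem:splitIntoTree} together with the constructive part of Theorem~\ref{thm:recognition} to output an explicit $(2k-1)$-tree $H$ and homomorphism $\phi$.

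The weight guarantee is then immediate from feasibility of the relaxation: any $k$-trail $U^*$ contained in $G$ splits into pieces of degree at most $k \le 2k-1$, hence is a feasible point of the matroid intersection problem of weight $w(U^*)$, so the relaxed optimum, and therefore the returned $(2k-1)$-trail, has weight at most $w(U^*)$ for every $k$-trail $U^*$. I expect the main obstacle to be exactly the rounding step: one must prove that the factor-two enlargement of the piece size precisely absorbs the integrality gap of the charging, i.e.\ that the fractional solution produced by matroid intersection always rounds to an honest connected detachment of maximum degree $2k-1$ and of no larger weight. This is where the Nash--Williams-type detachment and matroid-union machinery does the real work; the remaining ingredients (the characterization, the reduction to matroid intersection, and the weight comparison) are comparatively routine once the rounding lemma is in place.
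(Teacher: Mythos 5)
Your proof rests on a characterization that is false. The condition $\sum_{v\in V}\lceil \deg_U(v)/k\rceil \le |U|+1$ is necessary for $(V,U)$ to be a $k$-trail (a $k$-tree preimage is a tree with $|U|$ edges, hence $|U|+1$ vertices, of which at least $\lceil \deg_U(v)/k\rceil$ map to $v$), but it is not sufficient, and no purely global count can be: connectivity of the detachment imposes conditions on every vertex subset. Concretely, let $G$ consist of a center $v$ joined by a single edge to one vertex of each of $m\ge 4$ disjoint triangles, and take $k=3$, $U=E$. Then $\sum_{v}\lceil\deg(v)/3\rceil=\lceil m/3\rceil+3m\le 4m+1=|U|+1$, yet $G$ is not a $3$-trail: any preimage of maximum degree $3$ must split $v$ into at least two copies, there are no edges between copies of $v$ (no loops at $v$), and each triangle blob attaches to $\phi^{-1}(v)$ by exactly one edge, so every such preimage is disconnected. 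The same family shows your algorithm can output an invalid object: a connected spanning subgraph satisfying the counting bound for $2k-1$ need not be a $(2k-1)$-trail, so the returned solution may fail to be a trail at all, not merely be too heavy. This is precisely why the paper's recognition result does not go through counting but through the auxiliary graph $G'$, where splittings correspond to spanning trees of $G'$ and feasibility of a split vector is a genuine matroid intersection condition (Property~\ref{prop:eqAux}); the sentence in your proposal about charging extra copies to independent cycles ``in a Nash--Williams/transversal fashion'' is exactly where this structure is hidden and never supplied.

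Second, even granting a correct encoding, the technical core of the theorem is missing. The feasible region (connected spanning subgraph together with per-vertex splitting budgets) is a degree-constrained spanning-tree-type problem, not a matroid intersection, and your plan to round an ``essentially half-integral'' optimum is not meaningful: weighted matroid intersection always has an integral optimal solution, so there is nothing to round---the loss comes from the relaxation failing to encode $k$-trails, and one must prove that some procedure repairs this while only degrading $k$ to $2k-1$. The paper does this by iterative relaxation on an LP over spanning trees of $G'$ with the coupled constraints $x(\delta_{E^*}(V_v))+k\,x(E^*(V'_v))\le k\deg_E(v)$: repeatedly take an extreme point, delete edges with $x_e=0$, and drop the constraint of a vertex once it is guaranteed to hold with $k$ replaced by $2k-1$ for every remaining spanning tree; the uncrossing and slack-counting argument of Lemma~\ref{lem:counting} shows that one of these steps always applies, and the final integral spanning tree yields the $(2k-1)$-trail via the constructive equivalence with detachments. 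Your proposal defers exactly this step (your ``rounding lemma'') to unspecified machinery, so the factor $2k-1$ is asserted rather than earned; only the easy direction---that every $k$-trail contained in $G$ is feasible for the relaxation, giving the weight comparison and the infeasibility certificate---is actually argued.
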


Theorem~\ref{thm:weighted} almost resolves
a conjecture in~\cite{sebo_2015_travelling},
claiming that one can find a $(2k-2)$-trail
in $G$ of weight upper bounded by the weight
of any $k$-trail contained in $G$, assuming
$k\geq 3$ and that the weights $w$ are
nonnegative.
Our result only implies the existence
of a cheap $(2k-1)$-trail; however, it
holds for arbitrary weights, and not just
nonnegative ones.
Furthermore, we can show that, for arbitrary
weights, the factor
$2k-1$ is optimal when comparing to
a natural LP relaxation.

\subsubsection*{Organization of paper}

Section~\ref{sec:recognition} proves
Theorem~\ref{thm:recognition} and shows how
$k$-trails can be studied using tools from
algorithmic matroid theory.
Section~\ref{sec:containsKTrail} discusses hardness proof for Theorem~\ref{thm:hardness}
as well as the algorithm for Theorem~\ref{thm:containToBeing}. In Section~\ref{sec:weighted}, we give the
algorithm for Theorem~\ref{thm:weighted}.

\subsubsection*{Basic terminology}

The degree of a vertex $v\in V$
in a graph $G=(V,E)$ is denoted by
$\deg_G(v)$, or simply $\deg(v)$ if there is
no danger of confusion.
If the graph is clear from context,
we will also use the notation
$\deg_U(v) := |\delta(v)\cap U|$
for a set $U\subseteq E$ and $v\in V$.

\section{Recognition of $k$-trails}\label{sec:recognition}

Let $G=(V,E)$ be an undirected graph
and assume we want to show that $G$
is a $k$-trail for $k$ as small as possible.
Consider some tree $H=(W,F)$ such that
$G$ is the homomorphic image of $G$ by some
onto function $\phi: W\rightarrow V$.
Let $v\in V$ and consider all vertices
of $H$ that get mapped to $v$, i.e.,
$\phi^{-1}(v)=\{w_1,\dots, w_\ell\}$,
where $\ell=|\phi^{-1}(v)|$ is the
multiplicity of $v$.
Clearly, we have
\begin{equation*}
\deg_G(v) = \sum_{i=1}^\ell \deg_H(w_i).
\end{equation*}
Knowing that $v$ gets split into $\ell$
vertices by $\phi$, for degrees to be
low in $H$ it would be best if all
$w_i$ for $i\in [\ell]$ have about the
same degree.
It turns out that starting with any
$H$ and corresponding $\phi$, we can
balance out the degrees of vertices in
$H$ that correspond to the same vertex in $G$,
using a simple modification algorithm.
The following lemma formalizes this
statement.

\begin{lemma}\label{lem:balancing}
Given a graph $G=(V,E)$, a tree
$H=(W,F)$, and an onto function
$\phi: W \rightarrow V$ such that
$G$ is the homomorphic image of
$H$, one can determine in polynomial
time a tree $H'=(W,F')$
such that
\begin{enumerate}[nosep, label=(\roman*)]
\item $G$ is the homomorphic image
of $H'$ by $\phi$.

\item For any $v\in V$ and
$w\in W$ such that $\phi(w) = v$,
the degree of $w$ in $H'$ is either
$\left\lfloor
\frac{\deg_G(v)}{|\phi^{-1}(v)|} \right\rfloor$
or
$\left\lceil \frac{\deg_G(v)}{|\phi^{-1}(v)|}
\right\rceil$.
\end{enumerate}
\end{lemma}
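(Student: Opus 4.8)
The plan is to prove Lemma~\ref{lem:balancing} via a local exchange argument that iteratively reduces the imbalance among the copies of a fixed vertex. Fix a vertex $v \in V$ with $\phi^{-1}(v) = \{w_1, \dots, w_\ell\}$. The total degree $\sum_i \deg_H(w_i) = \deg_G(v)$ is an invariant of any tree $H'$ with $G$ as its $\phi$-homomorphic image, since the homomorphism condition forces the multiset of edge-endpoint-images to be preserved. So the goal is purely to equalize the degrees $\deg_{H'}(w_i)$ subject to this fixed sum, which means driving each of them to either $\lfloor \deg_G(v)/\ell \rfloor$ or $\lceil \deg_G(v)/\ell \rceil$ while keeping $H'$ a tree whose homomorphic image is $G$.

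The core step I would isolate is a single \emph{rebalancing move}: suppose two copies $w_i, w_j$ of the same vertex $v$ satisfy $\deg(w_i) \geq \deg(w_j) + 2$. I would like to move one edge incident to $w_i$ over to $w_j$, decreasing $\deg(w_i)$ by one and increasing $\deg(w_j)$ by one. Because $w_i$ and $w_j$ map to the same vertex $v$, this edge reassignment does not change which pair of vertices in $V$ any edge maps to, so the homomorphism condition from the definition of homomorphic image is preserved automatically. The only thing that can go wrong is connectivity and acyclicity: after moving the edge, the graph must still be a tree. I would handle this by choosing the edge to move along the tree structure. Concretely, in the tree $H$, consider the (unique) path from $w_j$ to $w_i$; let $\{a, w_i\}$ be the last edge on this path (the edge entering $w_i$ along the path from $w_j$'s side is \emph{not} the one to move, but rather I pick an edge $\{w_i, b\}$ incident to $w_i$ that lies off this path) and reconnect it to $w_j$, i.e., replace $\{w_i, b\}$ with $\{w_j, b\}$.

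The main obstacle — and the step deserving the most care — is verifying that some valid edge to move always exists and that the move keeps $H'$ a tree. Removing an edge from a tree splits it into two components; I need to argue that I can choose the removed edge at $w_i$ and the new edge at $w_j$ so that reconnection yields a tree rather than disconnecting the graph or creating a cycle. The clean way is the following: let $P$ be the path from $w_i$ to $w_j$ in $H$, and let $e = \{w_i, b\}$ be the edge of $P$ incident to $w_i$. Deleting $e$ splits $H$ into two trees, one containing $w_i$ and one containing $b$ and $w_j$. Adding the edge $\{w_j, b\}$ reconnects these two components (since $w_j$ is on the same side as $b$, one must instead take the first edge of $P$ from the $w_i$ end and reattach its far endpoint to $w_j$, which lies in the other component) producing a single tree again; acyclicity holds because we reconnect two disjoint trees by exactly one edge, and the degree count shifts exactly as desired. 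This is precisely the classical tree-edge-swap argument, and it shows each move preserves the tree property and the homomorphic image $G$.

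Finally, I would argue termination and correctness via a potential function, for instance $\Phi = \sum_{v \in V} \sum_{w \in \phi^{-1}(v)} \deg_{H'}(w)^2$, or equivalently the sum over $v$ of $\sum_i \deg(w_i)^2$. Each rebalancing move strictly decreases $\Phi$ (moving an edge from a higher-degree copy $w_i$ to a lower-degree copy $w_j$ with $\deg(w_i) \geq \deg(w_j) + 2$ strictly decreases the sum of squares), and $\Phi$ is a nonnegative integer bounded below, so the process halts. At termination no pair of copies of any vertex differs in degree by at least $2$, which exactly forces every $\deg_{H'}(w)$ for $\phi(w) = v$ to lie in $\{\lfloor \deg_G(v)/\ell \rfloor, \lceil \deg_G(v)/\ell \rceil\}$, giving property (ii); property (i) is maintained throughout by the edge-swap invariant. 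The number of moves is polynomial since $\Phi$ starts at most $|W| \cdot \max_w \deg(w)^2 \leq \operatorname{poly}(|V|,|E|)$ and drops by at least one each step, so the overall procedure runs in polynomial time.
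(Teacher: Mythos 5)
Your overall strategy is exactly the one the paper uses: repeatedly reassign an edge from a higher-degree copy $w_i$ of $v$ to a lower-degree copy $w_j$ whenever $\deg(w_i)\geq \deg(w_j)+2$, observe that this preserves the homomorphic image because $\phi(w_i)=\phi(w_j)$, and bound the number of iterations via the sum-of-squares potential $\sum_{w}\deg(w)^2$, which is also the paper's potential. Moreover, the move you describe in your second paragraph --- pick an edge $\{w_i,b\}$ incident to $w_i$ that lies \emph{off} the $w_i$--$w_j$ path and replace it by $\{w_j,b\}$ --- is precisely the paper's exchange step.

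However, your third paragraph, the verification you yourself flag as ``deserving the most care,'' fails as written. There you instead let $e=\{w_i,b\}$ be the edge \emph{of} $P$ incident to $w_i$. Deleting that edge puts $b$ and $w_j$ in the same component, so adding $\{w_j,b\}$ creates a cycle inside that component while leaving $w_i$'s component disconnected; the result is not a tree. The parenthetical attempt to repair this is circular: ``the first edge of $P$ from the $w_i$ end'' is the same edge $e$, and reattaching its far endpoint to $w_j$ is the same failed move. The correct verification must use the off-path edge of your second paragraph: since $\deg(w_i)\geq \deg(w_j)+2\geq 3$ and $P$ occupies only one edge at $w_i$, an off-path edge $\{w_i,b\}$ exists (an existence argument you never actually supply, and which is where the hypothesis $\deg(w_i)\geq\deg(w_j)+2$ is needed); deleting it leaves $w_i$ and $w_j$ together in one component, because $P$ remains intact, and leaves $b$ in the other, so adding $\{w_j,b\}$ joins two disjoint trees by a single edge and yields a tree. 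With that paragraph corrected, your argument coincides with the paper's proof.
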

\begin{proof}
We show that there exists a tree $H'=(W,F')$ such that for each vertex $v\in V$, and $w,w'\in \phi^{-1}(v)$, we have $|\deg_{H'}(w)-\deg_{H'}(w')| \leq   1$. The lemma then follows naturally. Suppose that $H=(W,F)$ does not satisfy the condition. Thus for some $v\in V$ and $w,w'\in \phi^{-1}(v)$, we have $\deg_{H}(w)\geq 2+ \deg_{H}(w')$. Let $u$ be a neighbor of $w$ that does not lie on the unique path from  $w$ to $w'$ in $H$. Since the degree of $w$ is at least three, such a neighbor always exists. Now, we construct a $\hat{H}=(W,\hat{F})$ where we let $\hat{F}=F\cup \{\{u,w'\}\}\setminus \{\{u,w\}\}$. It is straightforward to see that $\hat{H}$ is a tree and $G$ is also an homomorphic image of $\hat{H}$.  We now repeat this operation as long as we can find such a pair of vertices. Observe that  $\sum_{u\in W} \deg_{\hat{H}}(u)^2 <\sum_{u\in W} \deg_{{H}}(u)^2$ and thus a potential argument implies that the sequence is polynomially bounded. Thus the tree obtained at the end of the sequence, $H'=(W,F')$, satisfies the conditions of the lemma.
\end{proof}

We leverage the above lemma to rephrase
the problem of whether a graph is a $k$-trail
in terms of multiplicities.

\begin{definition}[feasible multiplicity vector]
Let $G=(V,E)$ be a graph.
A vector $\lambda \in \mathbb{Z}_{> 0}^V$ is a
\emph{feasible multiplicity vector} (for $G$)
if $G$ is the homomorphic image of a connected graph with
multiplicities given by $\lambda$;
more formally,
if there is a connected graph $H=(W,F)$ such that
$G$ is the homomorphic image of $H$ by some
onto function $\phi:W\rightarrow V$, and
$|\phi^{-1}(v)| = \lambda(v) \;\;\forall v\in V$.
\end{definition}

Feasible multiplicity vectors fulfill the following
down-monotonicity property.

\begin{lemma}\label{lem:multDownClosed}
Let $G$ be a graph and $\lambda \in \mathbb{Z}^V_{>0}$
be a feasible multiplicity vector. Then any vector
$\lambda' \in \mathbb{Z}^V_{>0}$ with $\lambda' \leq \lambda$
(component-wise) is also a feasible multiplicity
vector.

Furthermore, this result is constructive:
Given a connected graph $H$ and homomorphism $\phi$
such that $G$ is the $\phi$-homomorphic image of $H$
and $\lambda$ is the multiplicity vector corresponding
to $\phi$, we can efficiently construct
for any $\lambda'\leq \lambda$ a connected graph $H'$
and homomorphism $\phi'$ such that $G$ is the
$\phi'$-homomorphic image of $H'$ with corresponding
multiplicity vector $\lambda'$.
\end{lemma}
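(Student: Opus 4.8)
The plan is to isolate a single elementary operation on $H$ that decreases the multiplicity of one chosen vertex by exactly one while preserving both connectivity and the homomorphic-image property; the full statement then follows by iterating this operation a polynomial number of times.

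First I would reduce to the case $\lambda' = \lambda - e_{v_0}$ for a single vertex $v_0 \in V$ with $\lambda(v_0) \geq 2$, where $e_{v_0}$ is the corresponding unit vector. Indeed, any $\lambda' \leq \lambda$ with $\lambda' \in \mathbb{Z}^V_{>0}$ can be reached from $\lambda$ by a sequence of $\sum_{v\in V}(\lambda(v)-\lambda'(v))$ unit decrements; since at every intermediate step the current vector dominates $\lambda'$, every coordinate stays at least $1$, and a coordinate is only decremented while it still exceeds $\lambda'(v)\geq 1$ (hence is at least $2$). Because $\sum_{v} \lambda(v) = |W|$, the number of decrements is polynomially bounded, so it suffices to handle one decrement efficiently and iterate.

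The elementary operation is a \emph{merge}. As $\lambda(v_0) = |\phi^{-1}(v_0)| \geq 2$, pick two distinct vertices $w_1, w_2 \in \phi^{-1}(v_0)$. Form $H'$ from $H$ by identifying $w_1$ and $w_2$ into a single new vertex $w^\ast$: every edge incident to $w_1$ or $w_2$ becomes incident to $w^\ast$ instead, and an edge $\{w_1,w_2\}$, if present, becomes a loop at $w^\ast$. Define $\phi'$ to agree with $\phi$ on all unchanged vertices and set $\phi'(w^\ast) = v_0$. I would then verify three points. (i) $H'$ is connected, since identifying two vertices of a connected graph yields a connected graph. (ii) The multiplicity vector of $\phi'$ is exactly $\lambda'$: the preimage of $v_0$ loses one element (the pair $w_1,w_2$ is replaced by the single $w^\ast$) while every other preimage is untouched. (iii) $G$ is the $\phi'$-homomorphic image of $H'$: there is an edge-preserving bijection between $H$ and $H'$ under which each edge keeps the pair $\{\phi(\cdot),\phi(\cdot)\}$ it maps to, because $w_1$ and $w_2$ already sit over the common vertex $v_0$; hence the edge counts over every pair $\{u,u'\}$ are unchanged, and the equality inherited from $\phi$ transfers to $\phi'$.

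The main point to verify carefully is step (iii), and within it the loop case: one must confirm that turning an edge between the two merged copies into a loop does not disturb the edge-count equality. This holds precisely because such an edge maps to $\{v_0,v_0\}$ both before and after the merge, so it contributes to the same loop count at $v_0$ in $G$. Everything else is bookkeeping, and since each merge is a constant-time local modification and only polynomially many merges are required, the resulting construction of $H'$ and $\phi'$ is efficient.
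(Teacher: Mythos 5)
Your proposal is correct and follows essentially the same route as the paper: reduce to a single unit decrement and merge two vertices $w_1,w_2\in\phi^{-1}(v_0)$ into one, which preserves connectivity and the homomorphic-image property. Your treatment is in fact slightly more careful than the paper's, which merely asserts that $G$ remains a homomorphic image after the merge, whereas you explicitly verify the edge-count equality including the case where an edge $\{w_1,w_2\}$ becomes a loop.
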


\begin{proof}
Given a feasible vector $\lambda\in \mathbb{Z}^V_{>0}$ with $\lambda(v)\geq 2$, we show that $\bar{\lambda}$ is also a feasible vector where $\bar{\lambda}(v)=\lambda(v)-1$ and $\bar{\lambda}(u)=\lambda(u)$ for each $u\in V\setminus \{v\}$. The lemma then follows from induction.

Let $H=(W,F)$ and $\phi:W\rightarrow V$ certify the feasibility of $\lambda$. Since $\lambda(v)\geq 2$, there exists $w_1,w_2\in \phi^{-1}(v)$ such that $w_1\neq w_2$. Let $H'=(W',F')$ be obtained by merging $w_1$ and $w_2$ into a new vertex $w$. Thus $W'=W\cup \{w\}\setminus \{w_1,w_2\}$. Let $\phi':W'\rightarrow V$ where $\phi'(w)=v$ and $\phi'(x)=\phi(x)$ for each $x\in W'\setminus \{w\}$. Then $G$ is a homomorphic image of $H'$ by $\phi'$ and $H'$ certifies the feasibility of $\bar{\lambda}$, thus proving the lemma.
\end{proof}

Lemma~\ref{lem:balancing} and Lemma~\ref{lem:multDownClosed}
easily imply that
the question of whether $G$ is a $k$-trail
for some given $k$ can be reduced to
the problem of deciding whether
some multiplicity vector
$\lambda\in \mathbb{Z}_{>0}^V$
is feasible.

\begin{lemma}\label{lem:recToMult}
A graph $G=(V,E)$ is a $k$-trail if and only
if the following multiplicity vector
$\lambda\in \mathbb{Z}^V_{>0}$ is feasible:
\begin{equation*}
\lambda(v) = \left\lceil \frac{\deg_G(v)}{k} \right\rceil
  \qquad \forall v\in V.
\end{equation*}
\end{lemma}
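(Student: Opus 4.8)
The plan is to prove the two implications separately, in each case reducing to the two structural lemmas already at hand. The single arithmetic fact driving everything is the degree identity $\deg_G(v) = \sum_{w\in\phi^{-1}(v)}\deg_H(w)$ noted at the start of this section: if $v$ is split into $\mu(v)$ copies, their degrees sum to $\deg_G(v)$, so the copies cannot all have small degree unless $\mu(v)$ is large, and conversely a large $\mu(v)$ leaves room to spread the degrees below $k$.

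For the ($\Rightarrow$) direction I would start from a witness that $G$ is a $k$-trail: a connected graph $H$ of maximum degree at most $k$ with onto map $\phi$, and let $\mu(v)=|\phi^{-1}(v)|$ be its multiplicity vector. From $\deg_G(v)=\sum_{w\in\phi^{-1}(v)}\deg_H(w)$ and $\deg_H(w)\le k$ I get $\deg_G(v)\le k\,\mu(v)$, hence $\mu(v)\ge \deg_G(v)/k$; since $\mu(v)$ is an integer this sharpens to $\mu(v)\ge\lceil \deg_G(v)/k\rceil=\lambda(v)$, i.e.\ $\lambda\le\mu$ componentwise. As $G$ is connected on at least two vertices we have $\deg_G(v)\ge 1$, so $\lambda(v)\ge 1$ and $\lambda\in\mathbb{Z}^V_{>0}$. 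Feasibility of $\lambda$ then follows immediately from the down-monotonicity of feasible multiplicity vectors, Lemma~\ref{lem:multDownClosed}, applied to the feasible vector $\mu$.

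For the ($\Leftarrow$) direction I would take a witness that $\lambda$ is feasible: a connected graph $H$ with $|\phi^{-1}(v)|=\lambda(v)$. I first apply Lemma~\ref{lem:splitIntoTree} to turn $H$ into a tree $H'$; the composed homomorphism has some multiplicity vector $\mu$ with $\mu(v)\ge\lambda(v)$ for all $v$. I then apply the balancing Lemma~\ref{lem:balancing} to $H'$, which preserves the vertex set (hence $\mu$) and yields a tree in which every copy of $v$ has degree at most $\lceil \deg_G(v)/\mu(v)\rceil$. The closing step is the degree bound: since $\mu(v)\ge\lambda(v)=\lceil\deg_G(v)/k\rceil\ge \deg_G(v)/k$ we have $\deg_G(v)/\mu(v)\le k$, and because $k$ is an integer this forces $\lceil \deg_G(v)/\mu(v)\rceil\le k$. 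So the balanced tree is a $k$-tree whose homomorphic image is $G$, and $G$ is a $k$-trail.

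I do not expect a serious obstacle here; the content is essentially bookkeeping around ceilings. The two points I would be careful about are the ordering in the backward direction—I must split to a tree \emph{before} balancing, since Lemma~\ref{lem:balancing} is stated only for trees—and the observation that the multiplicity inflation caused by Lemma~\ref{lem:splitIntoTree} is harmless, indeed helpful, because a larger $\mu(v)$ only makes the per-copy bound $\lceil \deg_G(v)/\mu(v)\rceil$ smaller. The one genuinely load-bearing inequality is $\lceil \deg_G(v)/\mu(v)\rceil\le k$, which uses integrality of $k$; I would write this ceiling manipulation out explicitly rather than leave it implicit.
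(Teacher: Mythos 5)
Your proof is correct and follows essentially the same route as the paper's: the forward direction via the degree identity $\deg_G(v)=\sum_{w\in\phi^{-1}(v)}\deg_H(w)$ together with the down-monotonicity Lemma~\ref{lem:multDownClosed}, and the backward direction via the balancing Lemma~\ref{lem:balancing} and the ceiling bound $\lceil \deg_G(v)/\lambda(v)\rceil \leq k$. If anything you are more careful than the paper, which applies Lemma~\ref{lem:balancing} directly to the connected witness even though that lemma is stated for trees; your step of first invoking Lemma~\ref{lem:splitIntoTree}, noting that the inflated multiplicities only improve the per-copy degree bound, cleanly closes that small gap.
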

\begin{proof}
Let $G$ be a $k$-trail witnessed by $H$ and $\phi$.  Since $H$ has maximum degree $k$, we must have  $\phi^{-1}(v)\geq \left \lceil \frac{\deg_G(v)}{k}\right \rceil$ proving the only if direction.

Now, let $\lambda$ be feasible where $\lambda(v) = \left\lceil \frac{\deg_G(v)}{k} \right\rceil$  for each $ v\in V$. From Lemma~\ref{lem:balancing}, it follows that $G$ is a homomorphic image of a connected graph $H=(W,F)$
by $\phi:W\rightarrow V$ such that $\deg_H(w)\leq \left\lceil \frac{\deg_G(v)}{\lambda(v)}
\right\rceil $ for each $w\in \phi^{-1}(v)$. Since $\left\lceil \frac{\deg_G(v)}{\lambda(v)}
\right\rceil
 \leq k$, the lemma follows.
\end{proof}

To finally provide an efficient recognition algorithm
to decide whether a graph is a $k$-trail, we show
that feasible multiplicity vectors are highly structured.

Notice that a feasible multiplicity vector is at least
$1$ in each coordinate.
For simplicity, we introduce a shifted version of
feasible multiplicity vectors, called \emph{feasible
split vector}; a vector $\mu\in \mathbb{Z}_{\geq 0}^V$ is
a feasible split vector if $\mu+\b1$ is a feasible multiplicity
vector, where $\b1\in \mathbb{Z}^V$ is the all-ones vector.
Hence, a split vector tells us how many times a vertex
is split.

\begin{theorem}\label{thm:polym}
Let $G$ be an undirected graph.
The set of feasible split vectors correspond to
the integral points of a polymatroid~\footnote{
A polymatroid over a finite set $N$ is a polytope
$P\subseteq \mathbb{R}^N_{\geq 0}$ described by
$P=\{x\in \mathbb{R}^N_{\geq 0} \mid x(S) \leq f(S)
\;\forall S\subseteq N\}$,
where $f:2^N \rightarrow \mathbb{Z}_{\geq 0}$
is a submodular function. We refer the interested
reader to~\cite[Volume B]{schrijver_2003_combinatorial}
for more information on polymatroids.}, i.e.,
\begin{equation*}
P_G := \conv(\{\mu\in \mathbb{Z}^V_{\geq 0} \mid \mu
\text{ is a feasible split vector}\}),
\end{equation*}
is a polymatroid.

Furthermore, we can efficiently optimize over
$P_G$, and for any feasible split vector
$\mu\in \mathbb{Z}^V_{\geq 0}$ we can efficiently
find a connected graph $H=(W,F)$ and an onto function
$\phi:W\rightarrow V$ such that $G$ is the
homomorphic image of $H$, and
$|\phi^{-1}(v)| = \mu(v) \;\forall v\in V$.
\end{theorem}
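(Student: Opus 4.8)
The plan is to reformulate feasibility of a split vector as a membership condition in a polymatroid that is \emph{induced} by the cographic matroid of $G$. Write $M^*(G)$ for the cographic matroid on the edge set $E$ (the dual of the graphic matroid $M(G)$), with rank function $r^*$, and for $S\subseteq V$ let $\delta(S):=\bigcup_{v\in S}\delta(v)$ denote the set of edges incident to $S$. The candidate rank function is
\[
 f(S) := r^*(\delta(S)) \qquad \forall S\subseteq V,
\]
and the claim behind the theorem is that $\mu\in\mathbb{Z}^V_{\geq 0}$ is a feasible split vector if and only if there is an edge set $I\subseteq E$ that is independent in $M^*(G)$ together with an assignment $\sigma\colon I\to V$ sending each $e\in I$ to one of its endpoints, such that $|\sigma^{-1}(v)|=\mu(v)$ for all $v\in V$. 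The structural observation that makes the cographic matroid the right object is that $I$ is independent in $M^*(G)$ exactly when $(V,E\setminus I)$ is a connected spanning subgraph of $G$.

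I would prove the equivalence in three steps. First, \textbf{necessity of the cut condition}: if $\mu$ is feasible, witnessed by a connected $H$ with multiplicities $\mu+\b1$, then because only the vertices of $S$ are split, the induced subgraph $G-S$ sits unchanged inside $H$, so the cycle rank of $H$ cannot drop below that of $G-S$; counting vertices and edges gives $\mu(S)\le \mathrm{cyc}(G)-\mathrm{cyc}(G-S)$, and a short matroid computation shows this upper bound equals $r^*(\delta(S))=f(S)$. Second, \textbf{from the cut condition to an assignment}: given $\mu\ge 0$ with $\mu(S)\le f(S)$ for all $S$, the existence of the pair $(I,\sigma)$ is exactly a matroid-constrained transversal statement; applying Rado's theorem to the system in which vertex $v$ demands $\mu(v)$ distinct representatives drawn from $\delta(v)$ and forming an independent set in $M^*(G)$, the deficiency condition collapses to precisely $\mu(S)\le r^*(\delta(S))$ for all $S$. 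Third, \textbf{construction}: given $(I,\sigma)$, build $H$ from $G$ by, for each $e\in I$ with $\sigma(e)=v$, detaching the endpoint of $e$ at $v$ onto a fresh degree-one copy of $v$; since $E\setminus I$ is connected and spanning, every original vertex retains an $E\setminus I$-edge and all new leaves hang off this connected core, so $H$ is connected with multiplicities $\mu+\b1$. Chaining the three implications shows that the feasible split vectors are exactly $\{\mu\in\mathbb{Z}^V_{\ge 0}\mid \mu(S)\le f(S)\ \forall S\}$.

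It then remains to check that $f$ is a polymatroid rank function and to supply the algorithms. Submodularity of $f$ is immediate: $r^*$ is monotone and submodular, and $S\mapsto \delta(S)$ is a union of sets, so the composition $S\mapsto r^*(\delta(S))$ is submodular; together with $f(\emptyset)=0$ and $f\ge 0$ this makes $P_G$ a polymatroid whose integral points are exactly the feasible split vectors. For the algorithmic part, $f$ admits a polynomial-time value oracle, since graphic-matroid rank, and hence $r^*$, is computable by union--find; one can therefore optimize any linear function over $P_G$ with the polymatroid greedy algorithm and test membership by submodular minimization. The construction is effective as well: the pair $(I,\sigma)$ realizing a given $\mu$ is produced by a matroid-intersection/augmenting-path routine, and the leaf-detachment step then yields $H$ and $\phi$ in linear time.

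The step I expect to be the main obstacle is the second one together with identifying $M^*(G)$ as the correct matroid: recognizing that ``breaking a cycle at a vertex of $S$'' is governed by independence in the cographic matroid, and that the seemingly global connectivity requirement reduces to the single family of inequalities $\mu(S)\le r^*(\delta(S))$, is exactly what converts a connectivity problem into a clean matroid-membership problem. A few routine points need care—self-loops (a loop detached onto a new copy must keep one endpoint on the core, lest it become an isolated looped vertex), parallel edges, and matching the per-vertex counts $|\sigma^{-1}(v)|=\mu(v)$ exactly rather than merely in aggregate—but none of these affect the overall argument.
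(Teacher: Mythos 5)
Your proposal is correct, but it takes a genuinely different route from the paper. The paper never touches the cographic matroid of $G$ directly: it builds the auxiliary graph $G'$ in which each vertex $v$ is replaced by $\deg_G(v)$ endpoint-vertices joined into a clique $K_v$, shows that tree preimages of $G$ correspond exactly to spanning trees $T$ of $G'$ containing the edge-copies $\bar{E}$, and then obtains $P_G$ by two polymatroid operations: \emph{aggregation} of the graphic matroid of $G'/\bar{E}$ over the partition $\{K_v\}_{v\in V}$ (giving the base polytope $\bar{B}_G$ of the vectors $\alpha_T(v)=|T\cap K_v|$), followed by McDiarmid's polymatroid \emph{duality}~\cite{mcdiarmid_1975_rados} via $\mu=\deg-\b1-\alpha_T$. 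You instead \emph{induce} the cographic matroid $M^*(G)$ through the vertex--edge incidence relation: feasibility of $\mu$ becomes the existence of an independent set $I$ of $M^*(G)$ with an endpoint assignment realizing $\mu$, Rado's theorem converts this into the cut condition $\mu(S)\le r^*(\delta(S))$, and the degree-one detachment construction closes the cycle of implications (your three steps check out: the component count $1\ge c + |S| + \mu(S) - |\delta(S)|$, with $c$ the number of components of the preimage of $G-S$, gives exactly $\mu(S)\le r^*(\delta(S))$; Rado with demands $\mu(v)$ collapses correctly since all copies of $v$ have neighborhood $\delta(v)$; and the detachment graph is connected because $(V,E\setminus I)$ is spanning and connected). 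What your route buys is an explicit closed-form rank function, $f(S)=r^*(\delta(S))=|\delta(S)|-|S|-c(G[V\setminus S])+1$ for connected $G$, which the paper never writes down, together with membership testing and witness construction via one classical theorem; what it loses is the auxiliary graph itself, which the paper reuses as the backbone of Property~\ref{prop:eqAux2} and the LP relaxation in Section~\ref{sec:weighted}, so the paper's construction cannot simply be replaced by yours without reworking the weighted result. Two small repairs to your write-up: in Step 1 the witness $H$ may split vertices outside $S$ as well, so either merge those copies first (Lemma~\ref{lem:multDownClosed}) or argue, as above, that the preimage of $G-S$ in $H$ has at least $c(G[V\setminus S])$ components, which is all the count needs; and for the exact counts $|\sigma^{-1}(v)|=\mu(v)$, the cleanest phrasing is that a common independent set of size $\mu(V)$ in $M^*(G)$ and the transversal matroid with capacities $\mu(v)$ on $\delta(v)$ must meet every capacity with equality.
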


Before proving the theorem, we start with
a few observations and show that Theorem~\ref{thm:polym}
implies Theorem~\ref{thm:recognition}.

It is well-known
that $P_G$ being a polymatroid implies that
$P_G$ is given by
\begin{equation*}
P_G = \{x\in \mathbb{R}_{\geq 0}^V \mid x(S) \leq f(S)
 \;\forall S\subseteq V\},
\end{equation*}
where $f:2^V \rightarrow \mathbb{Z}_{\geq 0}$ is the
submodular function defined by
\begin{equation*}
f(S) = \max\{x(S) \mid x\in P_G\} \qquad S\subseteq V.
\end{equation*}
Many results on polymatroids typically assume
that a polymatroid is given through a value oracle
for the function $f$.
Clearly, if we can efficiently optimize over $P_G$,
we can also evaluate efficiently the submodular
function $f$, which, as described above, corresponds
to maximizing a $\{0,1\}$-objective over $P_G$.

We are particularly interested in checking whether some
split vector $\mu\in \mathbb{Z}_{\geq 0}^V$ is feasible.
Having an efficient evaluation oracle for $f$ allows
for checking whether $\mu\in P_G$
by standard procedures:
It suffices to solve the submodular
function minimization problem
$\min\{f(S) - x(S) \mid S\subseteq V\}$; if the optimal
value is negative then $\mu\not\in P_G$, otherwise
$\mu\in P_G$.
We will later see that the link between $k$-trails
and matroids that we establish implies
an easy way to check whether $\mu\in P_G$
using a simple matroid intersection problem
(without relying on submodular function
minimization).

Combining the above results and observations,
Theorem~\ref{thm:recognition} easily follows.

\medskip

\begin{proofof}{Theorem~\ref{thm:recognition}}
Let $\lambda\in \mathbb{Z}^V_{>0}$
be defined as in Lemma~\ref{lem:recToMult}, and let $\mu=\lambda-\b1$.
Lemma~\ref{lem:recToMult}---rephrased in terms of $\mu$---states
that $G$ is a $k$-trail if and only if $\mu$ is a feasible
split vector, which can be checked efficiently by
Theorem~\ref{thm:polym}. Furthermore, if $\mu$ is feasible,
then Theorem~\ref{thm:polym} also shows that we can efficiently
obtain a graph $\bar{H}=(W,\bar{F})$ and onto function
$\phi:W \rightarrow V$ such that
\smallskip
\begin{enumerate}[nosep, label=(\roman*)]
\item $G$ is the homomorphic image of $\bar{H}$ by $\phi$, and
\item $|\phi^{-1}(v)|=\mu(v)+1 \quad \forall v\in V$.
\end{enumerate}
\smallskip
By Lemma~\ref{lem:balancing} we can balance the degrees
of $\bar{H}$ for each vertex set $\phi^{-1}(v)$ efficiently,
to obtain a graph $H=(W,F)$ with balanced degrees as
stated in Lemma~\ref{lem:balancing}. It remains to observe
that $H$ is indeed a $k$-tree. This indeed holds:
let $w\in W$ and $v = \phi(w)$; we thus obtain
\begin{align*}
\deg_H(w) &\leq \left\lceil \frac{\deg_G(v)}{|\phi^{-1}(v)|}
           \right\rceil
        = \left\lceil \frac{\deg_G(v)}{\mu(v) + 1} \right\rceil
        = \left\lceil
             \frac{\deg_G(v)}{\left\lceil \frac{\deg_G(v)}{k}
               \right\rceil}
          \right\rceil
   \leq \left\lceil
           \frac{\deg_G(v)}{\frac{\deg_G(v)}{k}}
         \right\rceil
      = k\enspace,
\end{align*}
where the first inequality follows by Lemma~\ref{lem:balancing}
and the second equality by
$\mu(v)+1 = \lambda(v)= \lceil \deg_G(v)/k \rceil$.
\end{proofof}

\subsection*{Matroidal description
of $k$-trails and proof of Theorem~\ref{thm:polym}}

We start by introducing an auxiliary graph
$G'=(V',E')$ such that spanning trees in $G'$
can be interpreted as graphs $H$ such that $G$
is a homomorphic image of $H$.
Using this connection, we then derive
that $P_G$ is a polymatroid over which we can
optimize efficiently, and show how to
construct a homomorphic preimage of $G$
corresponding to some split vector $\mu$,
as claimed by Theorem~\ref{thm:polym}.

Hence, let $G=(V,E)$ be an undirected graph.
The graph $G'=(V',E')$ contains a vertex
for each of the two endpoints of each edge in $E$.
More formally, for each $v\in V$, the Graph $G'$
contains $\deg_G(v)$ many vertices
$V'_v := \{v_e\}_{e\in \delta(v)}$; hence,
\begin{equation*}
V' = \bigcup_{v\in V} V'_v.
\end{equation*}
We note that describing
$V'_v$ by $\{v_e\}_{e\in \delta(v)}$ is a slight
abuse of notation, since for also for
each loop at $v$ we include two vertices in
$V_v'$ and not just one.

Furthermore,
\begin{align*}
E' &= \bar{E} \cup K, \quad \text{where} \\
\bar{E} &= \{\{v_e,u_e\}\mid e=\{u,v\}\in E\}, \quad \text{and}\\
K &= \bigcup_{v\in V} K_v, \quad \text{where} \\
K_v &= \{\{v_e, v_f\} \mid v_e,v_f\in V'_v, v_e\neq v_f\}
\quad \forall v\in V.
\end{align*}
See Figure~\ref{fig:auxGraph} for an example of
the above construction.

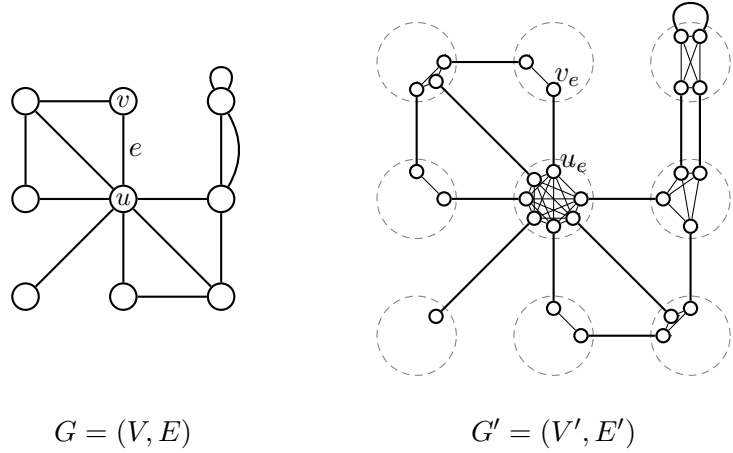
\begin{figure}[h]
\begin{center}
\begin{tikzpicture}[scale=1.3]

\tikzstyle{sn}=[thick, draw=black, fill=white, circle, minimum size=10,inner sep=1pt, font=\small]

\begin{scope}[shift={(0,0.4)}]

\begin{scope}[every node/.style={sn}]
\node (1) at (1.00,-4.00) {};
\node (2) at (2.00,-4.00) {};
\node (3) at (3.00,-4.00) {};
\node (4) at (1.00,-3.00) {};
\node (5) at (2.00,-3.00) {$u$};
\node (6) at (3.00,-3.00) {};
\node (7) at (1.00,-2.00) {};
\node (8) at (2.00,-2.00) {$v$};
\node (9) at (3.00,-2.00) {};
\end{scope}

\begin{scope}
\end{scope}

\begin{scope}[thick]
\draw (1) -- (5);
\draw (2) -- (3);
\draw (2) -- (5);
\draw (3) -- (5);
\draw (3) -- (6);
\draw (4) -- (5);
\draw (4) -- (7);
\draw (5) -- (6);
\draw (5) -- (7);
\draw (5) -- node[right=-2pt,pos=0.5] {$e$} (8);
\draw (6) -- (9);
\draw (6) to[bend right] (9);
\draw (7) -- (8);
\draw (9) to[out=60, in=120, looseness=6] (9);
\end{scope}

\end{scope}

\begin{scope}[shift={(5,-4)},scale=1.4]

\begin{scope}[every node/.style={sn,minimum size=30,draw=gray,densely dashed,thin}]
\node (b1) at (0,0) {};
\node (b2) at (1,0) {};
\node (b3) at (2,0) {};
\node (b4) at (0,1) {};
\node (b5) at (1,1) {};
\node (b6) at (2,1) {};
\node (b7) at (0,2) {};
\node (b8) at (1,2) {};
\node (b9) at (2,2) {};
\end{scope}

\begin{scope}[every node/.style={sn,minimum size=5pt}]
\def\d{0.2}
\path (b1) ++ (45:\d) node (b11) {};

\path (b2) ++ (0:\d) node (b21) {};
\path (b2) ++ (90:\d) node (b22) {};

\path (b3) ++ (90:\d) node (b31) {};
\path (b3) ++ (135:\d) node (b32) {};
\path (b3) ++ (180:\d) node (b33) {};

\path (b4) ++ (0:\d) node (b41) {};
\path (b4) ++ (90:\d) node (b42) {};

\path (b5) ++ (0:\d) node (b51) {};
\path (b5) ++ (90:\d) node (b52) {};
\path (b5) ++ (135:\d) node (b53) {};
\path (b5) ++ (180:\d) node (b54) {};
\path (b5) ++ (225:\d) node (b55) {};
\path (b5) ++ (270:\d) node (b56) {};
\path (b5) ++ (315:\d) node (b57) {};

\path (b6) ++ (70:\d) node (b61) {};
\path (b6) ++ (110:\d) node (b62) {};
\path (b6) ++ (180:\d) node (b63) {};
\path (b6) ++ (270:\d) node (b64) {};

\path (b7) ++ (270:\d) node (b71) {};
\path (b7) ++ (315:\d) node (b72) {};
\path (b7) ++ (0:\d) node (b73) {};

\path (b8) ++ (180:\d) node (b81) {};
\path (b8) ++ (270:\d) node (b82) {};

\path (b9) ++ (70:\d) node (b91) {};
\path (b9) ++ (110:\d) node (b92) {};
\path (b9) ++ (250:\d) node (b93) {};
\path (b9) ++ (290:\d) node (b94) {};
\end{scope}

\node at ($(b52)+(0.15,0.06)$) {$u_e$};
\node at ($(b82)+(0.1,0.09)$) {$v_e$};

\foreach \bn/\nn in {1/1,2/2,3/3,4/2,5/7,6/4,7/3,8/2,9/4} {
\foreach \i in {1,...,\nn} {
\foreach \j in {\i,...,\nn} {
\draw (b\bn\i) -- (b\bn\j);
}
}
}

\begin{scope}[thick]
\draw (b11) -- (b55);

\draw (b21) -- (b33);
\draw (b22) -- (b56);

\draw (b31) -- (b64);
\draw (b32) -- (b57);

\draw (b41) -- (b54);
\draw (b42) -- (b71);

\draw (b51) -- (b63);
\draw (b52) -- (b82);
\draw (b53) -- (b72);

\draw (b61) -- (b94);
\draw (b62) -- (b93);

\draw (b73) -- (b81);

\draw (b91) to[out=60, in=120, looseness=4] (b92);
\end{scope}

\end{scope}

\begin{scope}[shift={(0,-5)}]
\path let \p1 = (2) in node at (\x1,0) {$G=(V,E)$};
\path let \p1 = (b2) in node at (\x1,0) {$G'=(V',E')$};
\end{scope}

\end{tikzpicture}

\end{center}
\caption{
An example for the construction of the auxiliary
graph $G'=(V',E')$ from $G=(V,E)$.
In $G'$ the thick edges correspond to edges in $\bar{E}$
and the thin ones to edges in $K$.
The dashed gray circles correspond to the cliques
$(V'_v,K_v)$ and highlight the
link between the vertices $v\in V$
in $G$ and vertex sets $V'_v$ in $G'$ which correspond to $v$.
}\label{fig:auxGraph}
\end{figure}

For any spanning tree $T\subseteq E'$ in $G'$ that
contains $\overline{E}$, we define a graph $H_T=(W_T,F_T)$
and an onto function $\phi_T: W_T \rightarrow V$, such that
$G$ is the homomorphic image of $H_T$ by $\phi$, as follows.
Let $K_T = T\cap K = T\setminus \overline{E}$.
For each $v\in V$ consider the connected components of
$(V'_v, K_v)$. Let $q_v$ be the number of these connected
components and let
\begin{equation*}
V'_v = C_v^1 \cup C_v^2 \cup \dots \cup C_v^{q_v}
\end{equation*}
be the partition of $V'_v$ into vertex sets of the
$q_v$ connected components in $(V'_v, K_v)$.

We now define $H_T = (W_T, F_T)$ as the graph obtained
from $G'$ by contracting all $C_v^j$ for $v\in V$ and
$j\in [q_v]:=\{1,\dots, q_v\}$.
For clarity, we call the vertices in $H_T$
\emph{nodes}.
Contracting $C_v^j$ corresponds to
replacing $C_v^j$ with a single node, which we
identify with the set $C_v^j$ for simplicity, thus
leading to the following set of nodes for $H_T$:
\begin{equation*}
W_T = \{C_v^j \mid v\in V, j\in [q_v]\}.
\end{equation*}
Furthermore, two nodes $C_v^j, C_w^\ell \in W_T$ are
adjacent if and only if there is a pair of vertices,
one in $C_v^j$ and one in $C_w^\ell$, that are
connected by an edge in $T\cap \bar{E}$;
formally, this corresponds to
the existence of $e\in E$ such that
that edge in $\bar{E}$ that corresponds to
$e$ is in $T$, and $v_e \in C_v^j$
and $w_e \in C_w^\ell$.
Moreover, $\phi_T: W_T \rightarrow V$ is defined by
\begin{equation*}
\phi_T(C_v^j) = v \quad \forall\; C_v^j \in W_T.
\end{equation*}
Figure~\ref{fig:hT} shows an example construction of
$H_T$ from a spanning tree $T$ that contains $\bar{E}$.

\begin{figure}[h]

\begin{center}
\begin{tikzpicture}[scale=1.3]

\pgfdeclarelayer{background}
\pgfdeclarelayer{foreground}
\pgfsetlayers{background,main,foreground}

\tikzstyle{sn}=[thick, draw=black, fill=white, circle, minimum size=10,inner sep=1pt, font=\small]

\begin{scope}[shift={(5,-4)},scale=1.4]

\begin{scope}[every node/.style={sn,minimum size=30,draw=gray,densely dashed,thin}]
\node (b1) at (0,0) {};
\node (b2) at (1,0) {};
\node (b3) at (2,0) {};
\node (b4) at (0,1) {};
\node (b5) at (1,1) {};
\node (b6) at (2,1) {};
\node (b7) at (0,2) {};
\node (b8) at (1,2) {};
\node (b9) at (2,2) {};
\end{scope}

\begin{scope}[every node/.style={sn,minimum size=5pt}]
\def\d{0.2}
\path (b1) ++ (45:\d) node (b11) {};

\path (b2) ++ (0:\d) node (b21) {};
\path (b2) ++ (90:\d) node (b22) {};

\path (b3) ++ (90:\d) node (b31) {};
\path (b3) ++ (135:\d) node (b32) {};
\path (b3) ++ (180:\d) node (b33) {};

\path (b4) ++ (0:\d) node (b41) {};
\path (b4) ++ (90:\d) node (b42) {};

\path (b5) ++ (0:\d) node (b51) {};
\path (b5) ++ (90:\d) node (b52) {};
\path (b5) ++ (135:\d) node (b53) {};
\path (b5) ++ (180:\d) node (b54) {};
\path (b5) ++ (225:\d) node (b55) {};
\path (b5) ++ (270:\d) node (b56) {};
\path (b5) ++ (315:\d) node (b57) {};

\path (b6) ++ (70:\d) node (b61) {};
\path (b6) ++ (110:\d) node (b62) {};
\path (b6) ++ (180:\d) node (b63) {};
\path (b6) ++ (270:\d) node (b64) {};

\path (b7) ++ (270:\d) node (b71) {};
\path (b7) ++ (315:\d) node (b72) {};
\path (b7) ++ (0:\d) node (b73) {};

\path (b8) ++ (180:\d) node (b81) {};
\path (b8) ++ (270:\d) node (b82) {};

\path (b9) ++ (70:\d) node (b91) {};
\path (b9) ++ (110:\d) node (b92) {};
\path (b9) ++ (250:\d) node (b93) {};
\path (b9) ++ (290:\d) node (b94) {};
\end{scope}

\tikzstyle{t}=[black,thick]

\begin{scope}[thin]
\draw (b21) -- (b22);
\draw (b31) -- (b32);
\draw (b31) to[bend left] (b33);
\draw (b41) -- (b42);
\draw (b52) -- (b53);
\draw (b54) to[bend left] (b56);
\draw (b55) -- (b56);
\draw (b61) -- (b63);
\draw (b63) -- (b64);
\draw (b71) to[bend left] (b73);
\draw (b72) -- (b73);
\draw (b91) -- (b94);
\draw (b92) -- (b93);
\end{scope}

\begin{scope}[t]
\draw (b11) -- (b55);

\draw (b21) -- (b33);
\draw (b22) -- (b56);

\draw (b31) -- (b64);
\draw (b32) -- (b57);

\draw (b41) -- (b54);
\draw (b42) -- (b71);

\draw (b51) -- (b63);
\draw (b52) -- (b82);
\draw (b53) -- (b72);

\draw (b61) -- (b94);
\draw (b62) -- (b93);

\draw (b73) -- (b81);

\draw (b91) to[out=60, in=120, looseness=4] (b92);
\end{scope}

\end{scope}

\begin{scope}[shift={(10,-4)},scale=1.4]

\begin{scope}[every node/.style={sn,minimum size=30,draw=gray,densely dashed,thin}]
\node (b1) at (0,0) {};
\node (b2) at (1,0) {};
\node (b3) at (2,0) {};
\node (b4) at (0,1) {};
\node (b5) at (1,1) {};
\node (b6) at (2,1) {};
\node (b7) at (0,2) {};
\node (b8) at (1,2) {};
\node (b9) at (2,2) {};
\end{scope}

\begin{scope}[every node/.style={sn,minimum size=6pt,chamfered rectangle}]
\def\d{0.2}
\def\s{0.15}
\path (b1) ++ (45:\d) node (b11) {};

\path (b2) ++ (45:\s) node (b2a) {};

\path (b3) ++ (135:\s) node (b3a) {};

\path (b4) ++ (45:\s) node (b4a) {};

\path (b5) ++ (0:\s) node (b5a) {};
\path (b5) ++ (112.5:\s) node (b5b) {};
\path (b5) ++ (225:\s) node (b5c) {};
\path (b5) ++ (300:\s) node (b5d) {};

\path (b6) ++ (0:0.5*\s) node (b6a) {};
\path (b6) ++ (110:\s) node (b6b) {};

\path (b7) ++ (315:\s) node (b7a) {};

\path (b8) ++ (180:\s) node (b8a) {};
\path (b8) ++ (270:\s) node (b8b) {};

\path (b9) ++ (0:0.5*\s) node (b9a) {};
\path (b9) ++ (180:0.5*\s) node (b9b) {};

\end{scope}

\tikzstyle{t}=[black,thick]

\begin{scope}[t]
\draw (b11) -- (b5c);

\draw (b2a) -- (b3a);
\draw (b2a) -- (b5c);

\draw (b3a) -- (b6a);
\draw (b3a) -- (b5d);

\draw (b4a) -- (b5c);
\draw (b4a) -- (b7a);

\draw (b5a) -- (b6a);
\draw (b5b) -- (b8b);
\draw (b5b) -- (b7a);

\draw (b6a) -- (b9a);
\draw (b6b) -- (b9b);

\draw (b7a) -- (b8a);

\draw (b9a) to[out=60, in=120, looseness=6] (b9b);
\end{scope}

\end{scope}

\begin{scope}[shift={(0,-5)}]
\path let \p1 = (b52) in node at (\x1,0) {$(V',T)$};
\path let \p1 = (b2) in node at (\x1,0) {$H_T=(W_T,F_T)$};
\end{scope}

\end{tikzpicture}

\end{center}
\caption{
On the left-hand side, a spanning tree $T$ in
the auxiliary graph $G'=(V',E')$, that corresponds
to the graph $G$ shown in Figure~\ref{fig:auxGraph},
is highlighted. On the right-hand side, the corresponding
graph $H_T$ is shown. The homomorphism
$\phi_T:W_T\rightarrow V$ maps
all vertices of $H_T$ within the same dashed circle
to the same vertex of $G$.
}\label{fig:hT}

\end{figure}
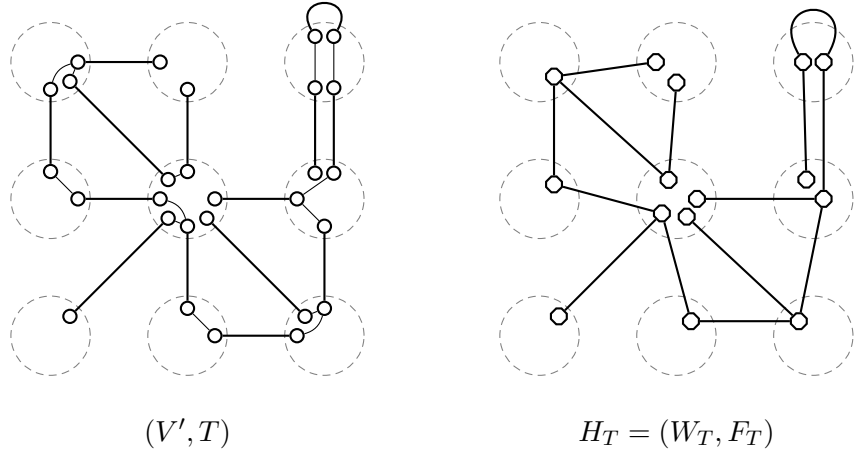

We start with some observations that follow immediately
from the above construction:
\begin{itemize}[nosep]
\item $G$ is the homomorphic image of
$H_T$ by $\phi_T$.
\item The multiplicity of $v\in V$ is $q_v$.

\item $G'$ can be constructed efficiently
from $G$.

\item Several spanning trees $T$ can lead to
the same graph $H_T$ and homomorphism $\phi_T$;
indeed, for any component $C_v^j$, the edges of
$T$ with both endpoints in $C_v^j$ form a spanning
tree over $C_v^j$, which can be replaced by any
other spanning tree without changing
$H_T$ or $\phi_T$.
\end{itemize}

\medskip

Conversely, every description of $G$ as a homomorphic
image of a tree can be obtained by the above construction:

\begin{claim}
Let $H=(W,F)$ be a tree and
$\phi:W\rightarrow V$ be an onto function
such that $G$ is the homomorphic
image of $H$ by $\phi$.
Then there exists a spanning tree $T$ in $G'$
such that $H=H_T$ and $\phi = \phi_T$.
\end{claim}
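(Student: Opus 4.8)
The plan is to reverse the construction: starting from the tree $H$ and the homomorphism $\phi$, I first recover an edge-level bijection to $G$, then lift it to the vertices of the auxiliary graph $G'$, and finally read off the spanning tree $T$. First I would use the defining property of homomorphic images to fix a bijection $\psi:F\rightarrow E$ between the edges of $H$ and the edges of $G$ such that the two endpoints of $f\in F$ are mapped by $\phi$ onto the two endpoints of $\psi(f)$. Such a bijection exists precisely because, for every (unordered) pair $\{u,v\}$ of vertices of $V$, the number of edges of $H$ whose endpoints map to $\{u,v\}$ equals the number of $G$-edges between $u$ and $v$. Next I would lift $\psi$ to the level of half-edges (incidences of a vertex with an edge, where a loop contributes two incidences). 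Recall that $V'_v$ is by definition indexed by the half-edges of $G$ at $v$. Combining $\psi$ with $\phi$ then yields a bijection $\alpha$ from the half-edges of $H$ onto $V'=\bigcup_{v\in V}V'_v$: a half-edge $(w,f)$ of $H$ with $\phi(w)=v$ is sent to the vertex $v_{\psi(f)}\in V'_v$. The key structural consequence is that, for each $v\in V$, the sets $S_w:=\alpha(\{(w,f):f\in\delta_H(w)\})\subseteq V'_v$, ranging over $w\in\phi^{-1}(v)$, form a partition of $V'_v$, with $|S_w|=\deg_H(w)$.

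With this partition in hand, I would construct $T$ by including all of $\bar{E}$ and, for each node $w\in W$, adding an arbitrary spanning tree of the clique that $K_{\phi(w)}$ induces on $S_w$ (possible since $K_v$ is complete on $V'_v$). By construction the connected components of $(V'_v,T\cap K_v)$ are exactly the sets $S_w$ with $w\in\phi^{-1}(v)$, so contracting them produces one node of $H_T$ per node $w$ of $H$, and $\phi_T$ agrees with $\phi$ under the identification $S_w\leftrightarrow w$. To match the edges, I would observe that the $\bar{E}$-edge associated with $e=\psi(f)$ joins $\alpha(w_1,f)$ and $\alpha(w_2,f)$, where $w_1,w_2$ are the endpoints of $f$; hence it connects the components $S_{w_1}$ and $S_{w_2}$. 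Consequently the edges of $H_T$ between contracted nodes correspond bijectively, and with the correct multiplicities, to the edges of $H$, which gives $H_T=H$.

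Finally I would check that $T$ is a spanning tree of $G'$. The edge count is forced: $|T|=|E|+\sum_{w\in W}(|S_w|-1)=|E|+|V'|-|W|=|E|+2|E|-|W|$, and since $H$ is a tree we have $|W|=|F|+1=|E|+1$, so $|T|=2|E|-1=|V'|-1$. Connectivity follows because contracting each $S_w$ (connected by its chosen spanning tree) leaves exactly the connected graph $H$; a connected graph on $|V'|$ vertices with $|V'|-1$ edges is a spanning tree. I expect the main work to be purely bookkeeping: setting up $\alpha$ so that loops and parallel edges are handled correctly -- in particular, making sure a loop of $G$ at $v$, which contributes two vertices to $V'_v$, is matched with the two half-edges of the corresponding edge of $H$ -- and verifying that the edge multiplicities between components of $H_T$ coincide with those of $H$. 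There is no serious conceptual obstacle here; the real content is that the homomorphism condition is exactly what makes $\alpha$ a bijection, after which $T$ essentially writes itself.
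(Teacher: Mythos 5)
Your proposal is correct and follows essentially the same construction as the paper: include all of $\bar{E}$, and for each node $w\in W$ add an arbitrary spanning tree on the vertices of $V'_{\phi(w)}$ corresponding to the edges $\delta_H(w)$. The only difference is that you spell out the bookkeeping the paper leaves implicit -- the endpoint-compatible edge bijection $\psi$, its half-edge lift $\alpha$, and the final verification that $T$ is a spanning tree with $H_T=H$ and $\phi_T=\phi$ -- which the paper compresses into ``one can now easily observe that the constructed tree $T$ has the desired properties.''
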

\begin{proof}
Indeed, the spanning tree $T$ can be chosen as follows.
Starting with $T=\emptyset$, we first add to $T$
all edges in $\bar{E}$.
For each $w\in W$, we do the following:
Let $v=\phi(w)$, and let $e_1, \dots, e_h\in E$ be
the $\phi$-images of the edges $\delta_H(w)$; in
particular, $e_1,\dots, e_h \in \delta_G(v)$.
We add to $T$ an arbitrary set of $h-1$ edges of
$G'$ that form a spanning tree over the vertices
$v_{e_1}, \dots, v_{e_h}$.
One can now easily observe that the constructed
tree $T$ has the desired properties.
\end{proof}

The equivalence between~\ref{item:eqTree}
and~\ref{item:eqAux} in the following statement
summarizes the above discussion. The equivalence
between~\ref{item:eqConn} and~\ref{item:eqTree}
follows from Lemma~\ref{lem:splitIntoTree}
(implying~\ref{item:eqConn} $\Rightarrow$ \ref{item:eqTree})
and Lemma~\ref{lem:multDownClosed}
(implying~\ref{item:eqTree} $\Rightarrow$ \ref{item:eqConn}).

\begin{property}\label{prop:eqAux}
Let $G=(V,E)$ be an undirected graph and
$\mu\in \mathbb{Z}_{\geq 0}^V$.
The following two statements are equivalent:
\begin{enumerate}[label=(\roman*),nosep]
\item\label{item:eqConn}
$G$ is the homomorphic image of
a connected graph $H=(W,F)$ by an onto
function $\phi:W\rightarrow V$ such that
$|\phi^{-1}(v)| = \mu(v)+1 \;\forall v\in V$.

\item\label{item:eqTree}
$G$ is the homomorphic image of
a tree $H=(V,W)$ by an onto function
$\phi: W\rightarrow V$ such that
$|\phi^{-1}(v)| \geq \mu(v)+1 \;\forall v\in V$.

\item\label{item:eqAux}
There is a spanning tree $T$ in
the auxiliary graph $G'=(V',E')$ such that
$\bar{E}\subseteq T$ and
$|T\cap K_v|
\leq |\delta(v)|-1-\mu(v) \;\forall v\in V$.
\end{enumerate}
\end{property}

Furthermore, we highlight that the equivalences
in Property~\ref{prop:eqAux} are all constructive.

Using the above connection between the auxiliary graph
$G'$ and homomorphic preimages of $G$,
Theorem~\ref{thm:polym} can now be derived as follows.
For brevity, we use the following notation.
For any spanning tree $T$ in $G'$ such
that $\bar{E}\subseteq T$, we define
$\alpha_T\in \mathbb{Z}^V_{\geq 0}$ by
$\alpha_T(v) := |T\cap K_v|
\;\forall v\in V$.
Furthermore, let $\deg\in \mathbb{Z}^V_{\geq 0}$
be the degree vector of $G$, i.e., $\deg(v)$ is the
degree of $v$ as usual.
The equivalence between point~\ref{item:eqConn}
and point~\ref{item:eqAux} of Property~\ref{prop:eqAux}
can thus be rephrased as follows.
\begin{equation}\label{eq:splitToT}
\text{$\mu\in \mathbb{Z}^V_{\geq 0}$ is a feasible
split vector}
\quad\Leftrightarrow\quad
\parbox[c]{0.40\linewidth}{
$\exists$ spanning tree $T$ in $G'$ with
$\bar{E}\subseteq T$ and $\alpha_T \leq \deg-\b1-\mu$,
}
\end{equation}
where $\b1\in \mathbb{Z}^V$ is the all-ones vector.

The equivalence highlighted by~\eqref{eq:splitToT}
directly leads to an efficient way to check whether
a given vector $\mu\in \mathbb{Z}^V_{\geq 0}$ is a
feasible split vector, and if so, obtain a homomorphic
preimage of $G$ that certifies it.
Indeed, finding a spanning tree $T$ in $G'$ that
contains $\bar{E}$ and satisfies $\alpha_T\leq \deg-\b1-\mu$
is a matroid intersection problem. More precisely, the
task is to find a spanning tree in $G'/\bar{E}$ (the
graph $G'$ after contracting $\bar{E}$)---such
spanning trees are the
bases of the graphic matroid on
$G'/\bar{E}$---whose edges
are simultaneously an independent set in the
partition matroid on the
partition $K=\cup_{v\in V} K_v$,
requiring that no more
than $\deg(v) - 1 -\mu(v)$ edges
are selected within $K_v$ for each $v\in V$.
If this matroid intersection problem has a solution,
then we get the desired spanning tree $T$ fulfilling
the conditions of point~\ref{item:eqAux} in
Property~\ref{prop:eqAux}, which is equivalent
to point~\ref{item:eqConn}, and this equivalence is
constructive, thus leading to the desired homomorphic
preimage $H$ of $G$ that corresponds to the split
vector $\mu$.

We finish by proving the claims about $P_G$ in
Theorem~\ref{thm:polym}.
For this, we start by observing that the vectors $\alpha_T$
are integral base vectors of a polymatroid.
\begin{lemma}\label{lem:alphaPolMat}
The polytope
\vspace{-0.5em}
\begin{equation*}
\bar{B}_G = \conv(\{\alpha_T \mid T \text{ is a spanning tree in $G'$
with $\bar{E}\subseteq T$}\})
\vspace{-0.5em}
\end{equation*}
is the base polytope of a polymatroid.
Furthermore, we can optimize efficiently over $\bar{B}_G$.
\end{lemma}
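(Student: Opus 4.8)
The plan is to recognize $\bar{B}_G$ as the image of the base polytope of a single matroid under a linear ``block-aggregation'' map, and then identify that image with the base polytope of an explicitly described polymatroid. As already observed in the discussion preceding the lemma, spanning trees $T$ of $G'$ with $\bar{E}\subseteq T$ are in bijection with the bases of the graphic matroid $M$ of the contracted graph $G'/\bar{E}$, via $T\mapsto B := T\cap K$; the ground set of $M$ is $K=\bigcup_{v\in V}K_v$, and since $G$ is connected, $G'/\bar{E}$ is connected and $M$ has rank $|E|-1$. Under this bijection $\alpha_T(v)=|B\cap K_v|$ for all $v$. Letting $\pi:\mathbb{R}^K\to\mathbb{R}^V$ be the linear map $\pi(y)_v = y(K_v)=\sum_{f\in K_v}y_f$, we have $\alpha_T=\pi(\chi^B)$ for the incidence vector $\chi^B$ of $B$, so by linearity of $\pi$,
\begin{equation*}
\bar{B}_G = \pi\bigl(\conv\{\chi^B : B \text{ a base of } M\}\bigr) = \pi(B(M)),
\end{equation*}
the image of the matroid base polytope $B(M)$ under $\pi$.

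Next I would exhibit the target polymatroid. Writing $r_M$ for the rank function of $M$ and abbreviating $K_S:=\bigcup_{v\in S}K_v$, define $g:2^V\to\mathbb{Z}_{\geq 0}$ by $g(S)=r_M(K_S)$. Since the $K_v$ are disjoint, $S\mapsto K_S$ commutes with union and intersection, so $g$ inherits monotonicity and submodularity from $r_M$, with $g(\emptyset)=0$; hence $g$ is an integer polymatroid rank function, and its base polytope $B_g=\{x\in\mathbb{R}^V_{\geq 0}\mid x(S)\leq g(S)\ \forall S,\ x(V)=g(V)\}$ is an integral polytope. The goal is to prove $\bar{B}_G=B_g$. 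The inclusion $\bar{B}_G\subseteq B_g$ is immediate: for any base $B$ we have $\alpha_T(S)=|B\cap K_S|\leq r_M(K_S)=g(S)$, and $\alpha_T(V)=|B|=r_M(K)=g(V)$. For the reverse inclusion, since $B_g$ is integral it equals the convex hull of its integer points, so it suffices to show that every integer point $c$ of $B_g$ is realized as $c=\alpha_T$ for some base $B$ of $M$.

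This realizability statement is the main obstacle, and I would establish it by matroid intersection. Given integral $c\geq 0$ with $c(V)=g(V)$ and $c(S)\leq r_M(K_S)$ for all $S$, consider the partition matroid on $K$ with capacity $c_v$ on each block $K_v$, and seek a common independent set of $M$ and this partition matroid of size $c(V)$. By the matroid intersection theorem it suffices to verify, for every $X\subseteq K$, that $r_M(X)+\sum_v\min(c_v,|K_v\setminus X|)\geq c(V)$, equivalently $r_M(X)\geq\sum_v\max(0,\,c_v-|K_v\setminus X|)$. Setting $S=\{v: c_v>|K_v\setminus X|\}$, the right-hand side equals $c(S)-|K_S\setminus X|$, and the constraint on $S$ together with $r_M(K_S)\leq r_M(K_S\cap X)+|K_S\setminus X|\leq r_M(X)+|K_S\setminus X|$ (using monotonicity and $r_M(A)\leq r_M(A')+|A\setminus A'|$ for $A'\subseteq A$) gives exactly this inequality. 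A common independent set $B$ of size $c(V)$ then has $|B\cap K_v|\leq c_v$ with total $\sum_v c_v=c(V)=r_M(K)$, forcing $|B\cap K_v|=c_v$ and $|B|=r_M(K)$, so $B$ is a base with $\alpha_T=c$. This proves $B_g\subseteq\bar{B}_G$, hence $\bar{B}_G=B_g$ is a polymatroid base polytope.

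Finally, for efficient optimization over $\bar{B}_G=\pi(B(M))$ I would pull weights back through $\pi$: to optimize a linear objective $w^\top x$, assign every edge $f\in K_v$ the weight $w_v$ and compute an optimal base $B$ of the graphic matroid $M$ (a minimum/maximum weight spanning tree in $G'/\bar{E}$) by the greedy algorithm; then $\alpha_T=\pi(\chi^B)$ is optimal over $\bar{B}_G$, and $T=\bar{E}\cup B$ is the corresponding spanning tree. I expect the realizability step via matroid intersection to be the only non-routine part; everything else is bookkeeping together with standard (poly)matroid facts, for which~\cite[Volume B]{schrijver_2003_combinatorial} can be cited.
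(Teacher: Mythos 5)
Your proof is correct, and its skeleton is the same as the paper's: both identify spanning trees of $G'$ containing $\bar{E}$ with bases of the graphic matroid $M$ on $G'/\bar{E}$, view the vectors $\alpha_T$ as the aggregation of $M$ over the partition $K=\bigcup_{v\in V}K_v$, and optimize by pulling the weights $w(v)$ back to the edges of $K_v$ and running the greedy/spanning-tree algorithm. The difference is what happens in the middle: the paper treats the key fact --- that aggregating a matroid over a partition yields the integral points of a polymatroid base polytope --- as a known black box, citing \cite{fujishige_2005_submodular} and \cite[Volume~B]{schrijver_2003_combinatorial}, whereas you prove it from scratch. You make the polymatroid rank function explicit, $g(S)=r_M\bigl(\bigcup_{v\in S}K_v\bigr)$, verify its submodularity, and close the nontrivial direction $B_g\subseteq\bar{B}_G$ by a matroid-intersection argument showing every integer point of $B_g$ is realized by a base of $M$ meeting each $K_v$ in a prescribed number of elements; I checked the intersection bound (the choice $S=\{v:c_v>|K_v\setminus X|\}$ and the inequality $r_M(K_S)\leq r_M(K_S\cap X)+|K_S\setminus X|$) and it is sound. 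What each approach buys: the paper's proof is shorter and defers to standard theory; yours is self-contained, exhibits the submodular function describing $\bar{B}_G$ explicitly (which also makes the later duality step for $P_G$ more concrete), and your realizability-via-matroid-intersection step is essentially the same mechanism the paper invokes separately, after Property~\ref{prop:eqAux}, to check feasibility of a split vector --- so your write-up unifies two things the paper keeps apart.
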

\begin{proof}
Let $M=(K,\mathcal{I})$ be the graphic matroid obtained
by starting with the graphic matroid on $G'$ and
contracting the edges $\bar{E}$. Notice that
the ground set of $M$ is indeed $K$, i.e.,
all edges with both endpoints in one of the set $V'_v$
for $v\in V$.
Spanning trees $T$ in $G'$ that contain $\bar{E}$
are precisely the bases of $M$.
Thus the vectors $\alpha_T$ are obtained by taking
a basis in $M$, considering the partition
$K=\cup_{v\in V} K_v$ of $K$, and counting the number
of elements that $T$ has in each part of this partition.
This way of constructing vectors out of a matroid
is often called the aggregation of a matroid
(see, e.g,~\cite{fujishige_2005_submodular}), and
is well known to lead to the integral points
of a base polytope of a
polymatroid (see \cite{fujishige_2005_submodular} and
\cite[Volume B]{schrijver_2003_combinatorial}).
This shows that $\bar{B}_G$ is indeed a polymatroid base
polytope. Furthermore, optimization over $\bar{B}_G$
can simply be done by optimizing over $M$, using
the greedy algorithm. More precisely, for a weight
function $w\in \mathbb{Z}^V$, a maximum weight point
in $\bar{B}_G$ is obtained by finding a maximum weight
spanning tree in $G'$ after contracting $\bar{E}$ and
by assigning, for $v\in V$, to each edge in $K_v$ a
weight equal to $w(v)$.
\end{proof}

Consider the polymatroid $\bar{P}_G$ that corresponds to
the base polytope $\bar{B}_G$, i.e.,
\begin{equation*}
\bar{P}_G = \{x\in \mathbb{R}^V_{\geq 0} \mid \exists
\alpha \in \bar{B}_G \text{ with } x\leq \alpha\}.
\end{equation*}
We finish the proof of Theorem~\ref{thm:polym} by showing
that $P_G$ is the (polymatroidal) dual of $\bar{P}_G$.
More precisely, McDiarmid~\cite{mcdiarmid_1975_rados}
(see also~\cite[Volume B, Section 44.6f]{schrijver_2003_combinatorial})
introduced the following notion of a dual of a polymatroid,
say $\bar{P}_G\subseteq \mathbb{R}^V$.
Consider a vector $y\in \mathcal{\mathbb{Z}^V}$ such that
$\bar{P}_G$ is contained in the box $[0,y]$; we choose
$y=\deg - \b1$.
Then the set of all points $y-\alpha$ for $\alpha\in \bar{P}_G$
correspond to the bases of a polymatroid, which is called
the \emph{dual} of $\bar{P}_G$ with respect to $y$.
By~\eqref{eq:splitToT}, the vectors $\mu\in \mathbb{Z}^V$
obtained by taking any integral point $\alpha \in \bar{B}_G$
and setting $\mu=\deg - \b1 - \alpha$ correspond precisely to
the maximal vertices of $P_G$ as defined in
Theorem~\ref{thm:polym}. Hence, $P_G$ is the dual of
$\bar{P}_G$ with respect to $y$, and thus a polymatroid.
Moreover, analogous to matroid duality, we can efficiently
optimize over $P_G$ because we can efficiently optimize
over $\bar{P}_G$
(see~\cite[Volume B]{schrijver_2003_combinatorial} for details).

\section{Containment of $k$-trails}\label{sec:containsKTrail}

We first prove the hardness result claimed in Theorem~\ref{thm:hardness}.

\medskip
\begin{proofof}{Theorem~\ref{thm:hardness}}
We give a reduction to the Hamiltonian path problem in cubic graphs which is known to be NP-complete. First we prove the theorem for $k=2$. We claim a cubic graph $G$ contains a $2$-trail if and only if $G$ contains a Hamiltonian path. Indeed if $G$ contains a Hamiltonian path $G'=(V,E')$, then the Hamiltonian path $G'$ certifies that $G$ contains a $2$-trail with $\phi$ given by the identity map between the vertices of $G'$ and $G$.

Now, suppose that $G$ contains a $2$-trail and let $G'=(V,U)$ be a $2$-trail contained in $G$ with \emph{minimum} number of edges. Let $H=(W,F)$ be a $2$-tree, i.e., a Hamiltonian path, and $\phi:W\rightarrow V$ certify that $G'$ is a $2$-trail. Let $w$ be one of the two leaves in $H$ and $v=\phi(w)$ and $(w,w')\in F$. If $|\phi^{-1}(v)|>1$, then we can consider $\hat{H}=H\setminus \{w\}$ and restrict $\phi$ to $W\setminus \{w\}$. It is straightforward to see that $\hat{G}=(V,U\setminus \{\phi(w),\phi(w')\}$ is a $2$-trail with possible preimage $\hat{H}$, thus contradicting the minimality of $G'$. Thus for each $v\in V$, if $|\phi^{-1}(v)|\geq 2$ then $\phi^{-1}(v)$ does not contain a leaf of $H$. But the degree of $v$ is three and if $|\phi^{-1}(v)|\geq 2$ then at least one of the vertices in $\phi^{-1}(v)$ must be a leaf. This implies that $\phi$ is one to one and $G'$ is a Hamiltonian path as well.

Consider any $k\geq 3$. We now reduce it to the $k=2$ case. Again consider any cubic graph $\tilde{G}=(\tilde{V},\tilde{E})$. For each vertex $v\in \tilde{V}$, we introduce a new set of vertices $v_{i}$, $1\leq i\leq k-2$ and connect them to $v$ via the edge $(v,v_i)$. We let the new graph be $G=(V,E)$ where $V=\tilde{V}\cup X$, with $X$ denoting the new vertices introduced. We now claim that $G$ has a $k$-trail if and only if $\tilde{G}$ contains a $2$-trail. Let $H=(W,F)$ denote the $k$-tree and $\phi:W\rightarrow V$ be such that $\bar{G}=(V,U)$ be the homomorphic image of $H$ by $\phi$ and $U\subseteq E$. Since every vertex $v\in X$ is a leaf in $G$, it must also be a leaf in $\bar{G}$. Thus $|\phi^{-1}(v)|=1$ and if $\{w\}=\phi^{-1}(v)$, then $w$ is also a leaf in $H$. Thus deleting $\phi(X)$ from $H$ keeps it connected. Let ${H'}=(W',F')$ be the {connected} graph obtained by deleting $\phi(X)$ from $H$. Restricting $\phi$ to $W'$, we obtain a map ${\phi'}: W'\rightarrow \tilde{V}$. Let ${G'}=(\tilde{V}, {U'})$ denote the image of ${H'}$ by ${\phi'}$ where ${U'}\subseteq \tilde{E}$. We claim that ${H'}$ is a Hamiltonian path. Suppose for sake of contradiction that there exists a vertex $w\in W'$ of degree at least three in $H'$. Let $v=\phi'(w)\in \tilde{V}$. Since $v$ has degree three in $\tilde{G}$, it must have degree exactly three in $G'$ and moreover, $|\phi^{-1}(v)|=1$. But then in $H$, the sole vertex in $\phi^{-1}(v_i)$ must be be connected to $w$ as well since $w$ is the only vertex in $\phi^{-1}(v)$. Since there are $k-2$ such vertices $v_i$, we obtain that the degree of $w$ in $H$ is at least $3+k-2= k+1$ which is a contradiction. Thus we obtain that $H$ is a Hamiltonian path and therefore, $\tilde{G}$ contains a $2$-trail, completing the proof.
\end{proofof}

\smallskip

To complement the hardness result, we prove Theorem~\ref{thm:containToBeing} which implies that we can approximate the minimum $k$ such that $G$ contains a $k$-trail by an additive one.

\medskip

\begin{proofof}{Theorem~\ref{thm:containToBeing}}
Let $G'=(V,U)$ denote a $k$-trail contained in $G$ with maximum number of edges. Let $H=(W,F)$ denote a $k$-tree and let $\phi:W\rightarrow V$ be an onto function
such that $G'$ is the $\phi$-homomorphic image of $H$.
\begin{claim}\label{claim:acyclic}
 $G\setminus U$ is acyclic.
 \end{claim}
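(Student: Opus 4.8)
The plan is to argue by contradiction using the maximality of $U$. Suppose $G\setminus U$ contains a cycle $C$, with vertices $v_1,\dots,v_\ell$ in cyclic order and edges $e_i=\{v_i,v_{i+1}\}$ (indices modulo $\ell$). I will show that $(V,U\cup C)$ is again a $k$-trail; since $|U\cup C|>|U|$, this contradicts the choice of $U$ as a $k$-trail contained in $G$ with the maximum number of edges, and hence forces $G\setminus U$ to be acyclic. Throughout I assume $k\ge 2$, which is the relevant range and exactly what the gadget below requires.

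To realize the extra edges of $C$, I first invoke Lemma~\ref{lem:splitIntoTree} so that the witness of $(V,U)$ may be taken to be a $k$-tree $H=(W,F)$ with onto map $\phi\colon W\to V$. I then append a fresh cyclic gadget $Y$: introduce new nodes $x_1,\dots,x_\ell$, extend $\phi$ by $\phi(x_i)=v_i$, and add preimage edges $\{x_i,x_{i+1}\}$ (modulo $\ell$), each realizing the corresponding edge $e_i$. Every $x_i$ then has degree exactly $2\le k$, and $Y$ realizes precisely the edge set $C$, so $H\cup Y$ is a (disconnected) preimage of $(V,U\cup C)$ of maximum degree at most $k$.

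The one obstacle is connectivity: $Y$ forms its own component and cannot be joined to $H$ by an extra edge without altering the homomorphic image. I would resolve this with a single alternating edge exchange that leaves every degree untouched. Pick any copy $w\in\phi^{-1}(v_1)$, which exists since the multiplicity of $v_1$ is at least $1$, together with an incident edge $\{w,w'\}\in F$, which exists because $H$ is connected on at least two nodes. Now delete the two edges $\{w,w'\}$ and $\{x_1,x_2\}$ and insert the two edges $\{x_1,w'\}$ and $\{w,x_2\}$. Because $\phi(x_1)=\phi(w)=v_1$, the edge $\{x_1,w'\}$ realizes the same image edge $\{v_1,\phi(w')\}$ as the deleted $\{w,w'\}$, and $\{w,x_2\}$ realizes $\{v_1,v_2\}=e_1$ as the deleted $\{x_1,x_2\}$; thus the homomorphic image stays exactly $U\cup C$. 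Moreover each of the four nodes $w,w',x_1,x_2$ loses one incident edge and gains one, so all degrees are preserved and the maximum degree remains at most $k$.

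It remains to check that the exchange restores connectivity, which is the crux of the argument. Removing $\{w,w'\}$ splits the tree $H$ into the fragment containing $w$ and the fragment containing $w'$, while removing $\{x_1,x_2\}$ turns the cycle $Y$ into a path with endpoints $x_1$ and $x_2$; the inserted edges $\{x_1,w'\}$ and $\{w,x_2\}$ then stitch this path between the two fragments, producing a single connected graph $H'$ on $|W|+\ell$ nodes with $|W|+\ell-1$ edges, which is in fact again a $k$-tree. Hence $(V,U\cup C)$ is the $\phi$-homomorphic image of the connected, maximum-degree-at-most-$k$ graph $H'$, i.e.\ a $k$-trail, contradicting the maximality of $U$. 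The degenerate cases, namely a loop (giving $\ell=1$) or a length-two cycle of parallel edges, are handled by the same exchange and should be noted in passing.
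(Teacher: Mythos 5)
Your proof is correct, but it takes a genuinely different route from the paper's. The paper attaches $r$ \emph{pendant} nodes $u_1,\dots,u_r$ directly to existing copies $w_i\in\phi^{-1}(v_i)$ (edge $\{w_i,u_{i+1}\}$ realizing the cycle edge $\{v_i,v_{i+1}\}$), which keeps the preimage connected for free but pushes some degrees up to $k+1$; it then invokes the degree-balancing Lemma~\ref{lem:balancing} and the estimate $\bigl\lceil (k|\phi^{-1}(v)|+2)/(|\phi^{-1}(v)|+1)\bigr\rceil\le k$ (valid for $k\ge 2$) to recover a $k$-tree. You instead add a \emph{disjoint} cycle gadget whose nodes all have degree $2\le k$, so the degree bound holds by construction, and then repair connectivity by a single image-preserving, degree-preserving $2$-edge exchange ($\{w,w'\},\{x_1,x_2\}\ \rightarrow\ \{x_1,w'\},\{w,x_2\}$), which correctly stitches the opened gadget path between the two fragments of the cut tree and yields a $k$-tree outright. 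What each approach buys: the paper's argument is shorter because it reuses Lemma~\ref{lem:balancing}, which is already established and needed elsewhere, and never has to worry about connectivity; yours is more self-contained and elementary, avoids the balancing machinery and ceiling arithmetic entirely, keeps every node degree fixed, and makes the role of the hypothesis $k\ge 2$ completely transparent (it is exactly the degree of the gadget nodes). Your handling of the degenerate cases (loops, $\ell=1$, and parallel-edge $2$-cycles, $\ell=2$) via the same exchange is also sound, and these cases do need to be covered since the paper's graphs allow loops and parallel edges.
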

 \begin{proof}[Proof of Claim~\ref{claim:acyclic}]
  Suppose for sake of contradiction, there is a cycle $C\in G\setminus U$. We will augment $G'$ to a $k$-trail which also includes all edges of $C$. Let $C=(v_1,\ldots, v_r)$ where $\{v_i,v_{i+1}\}\in E$ for each $1\leq i\leq r$ where we let $v_{r+1}=v_1$. Let $w_i\in \phi^{-1}(v_i)$ for each $1\leq i\leq r$. Introduce a new vertex $u_i$ for each $1\leq i\leq r$, and let $W'=W\cup \{u_1,\ldots, u_r\}$ and $F'=F\cup \{\{w_i,u_{i+1}\}: 1\leq i\leq r\}$. Moreover, we extend $\phi$ to $\phi'$ by letting $\phi'(u_i)=v_i$ for each $i$. It is easy to see that  $G''=(V,U\cup C)$ is an homomorphic image of $H'=(W',F')$ by $\phi':W'\rightarrow V$. While $H'$ is not necessarily a $k$-tree, Lemma~\ref{lem:balancing} implies that there is another $\tilde{H}=(W',\tilde{F})$ such that the degree of any vertex $w\in \phi^{-1}(v)$ for $v\in C$ is at most
  \begin{align*}
     \left \lceil \frac{\deg_{G'}(v)+2}{|\phi'^{-1}(v)|}\right\rceil \leq \left \lceil \frac{k |\phi^{-1}(v)|+2}{|\phi^{-1}(v)|+1}\right \rceil \leq k.
  \end{align*}
  For any other vertex $w$, the degree does not change and therefore, remains at most $k$. This proves
the claim.
\end{proof}

Now we will augment $U$ to $\hat{U}$ inductively one edge at a time while maintaining two hypotheses. Firstly, $\hat{G}=(V,\hat{U})$ will be a $(k+1)$-trail. Moreover, only vertices in $\hat{H}=(\hat{W},\hat{F})$---the preimage of $\hat{G}$---of degree $k+1$ will be isolated nodes in $G\setminus \hat{U}$. Clearly, the conditions are satisfied initially when $\hat{U}=U$ and $\hat{G}=G'$ since there are no vertices of degree $k+1$ in $\hat{H}=H$. Suppose it is true for some $\hat{U}\supseteq U$ such that $E\setminus \hat{U}\neq \emptyset$. Since $G\setminus \hat{U}$ is a forest, there exists a leaf vertex $u$ with the only edge incident being $\{u,v\}$. We include $\{u,v\}$ in $\hat{U}$, introduce another a new vertex $v'$ in $\hat{W}$ such that $\phi(v')=v$. Moreover, we include the edge $\{u',v'\}$ in $\hat{F}$ where $u'\in \phi^{-1}(u)$. By induction hypothesis, $u'$ must have had degree $k$ initially and after the introduction of the edge $\{u',v'\}$, its degree increases to $k+1$. Since $u$ is now isolated in $G\setminus \hat{U}$ the induction hypothesis continues to hold.

Thus, after inclusion of all edges we obtain a $(k+1)$-tree such that $G$ is the homomorphic image of the tree, proving the theorem.
\end{proofof}

\subsection{Containment of minimum weight $k$-trails}\label{sec:weighted}

Now we consider the problem of finding the minimum weight $k$-trail contained in $G=(V,E)$ and prove Theorem~\ref{thm:weighted}. Our goal is to use the auxiliary graph $G'=(V',E')$ described in the proof of Theorem~\ref{thm:recognition} for the recognition algorithm. Recall that edges in $E$ are in one-to-one correspondence with $\bar{E}\subseteq E'$.  We extend the weight function $w:E\rightarrow \mathbb{Z}$ to all edges in $E'$, where $e\in \bar{E}$ gets the same weight as the corresponding edge in $E$. The rest of the edges in $E'\setminus \bar{E}$ are assigned weight $0$.  Recall, $V_v'$ denotes the set of vertices introduced for vertex $v$ and $K_v$ denote the complete graph on $V_v'$. Identical to Property~\ref{prop:eqAux}, we state the following property. With a slight abuse of notation, for a subgraph $\hat{G}=(V,\hat{E})$ of $G$, we will also denote as $\hat{E}$ the set of edges in $\bar{E}$ that correspond to $\hat{E}$.
\begin{property}\label{prop:eqAux2}
Let $\hat{G}=(V,\hat{E})$ denote a subgraph of $G$. Let $\mu\in \mathbb{Z}_{\geq 0}^V$.
The following two statements are equivalent:
\begin{enumerate}[label=(\roman*),nosep]
\item\label{item:eqConn2}
$\hat{G}$ is the homomorphic image of
a connected graph $H=(W,F)$ of maximum degree $k$ by an onto
function $\phi:W\rightarrow V$ with
$|\phi^{-1}(v)| = \mu(v)+1 \;\forall v\in V$.

\item\label{item:eqAux2}
There is a spanning tree $T$ in
the auxiliary graph $G'=(V',E')$ such that
$\hat{E}\subseteq T$,
$|T\cap K_v|
\leq |\delta_{E}(v)|-1-\mu(v) \;\forall v\in V$, and finally, $\frac{\delta_{E'}(v)}{ |\delta_{E}(v)|- |T\cap K_v|}\leq k$.
\end{enumerate}
\end{property}

   We give a general linear programming relaxation for the problem. For any set $S\subseteq V'$ and set of edges $F\subseteq E'$, we use the notation $F(S)=\{\{u,v\}\in F: u,v\in S\}$. We introduce a variable $x_e$ for each edge $e\in E'$. The first set of constraints enforce that $x$ is in the convex hull of spanning trees of $G'$.  We place \emph{degree constraints} on $\bar{E}$-edges incident with $V_v'$ for a well chosen subset of vertices $v\in Q$. We will initialize $Q=V$ but remove these constraints successively in later iterations. We also write the linear program with the edge set $\hat{E}\subseteq E'$. Again we initialize $F=E'$. Property~\ref{prop:eqAux2} implies that the following linear program is a relaxation.
\begin{equation}
\begin{tabular}{lll@{\quad}l}
&{$\displaystyle\min \sum_{e\in E^*} w_e x_e$} &&  \\[-0.8em]
&&&  \\
&$x(E^*)$ &$= |V'|-1$ & \\
&$x(E^*(S))$ &$\leq |S| -1$ &$\forall S\subseteq V', S\neq\emptyset$\\
&$x(\delta_{E^*}(V_v)) + k x(E^*(V_v'))$ & $\leq k \cdot \deg_{E}(v)$ & $v\in Q$\\
&$x_e$&$ \geq 0$ &$\forall e\in E$\\
\end{tabular}\tag{\lpa}
\end{equation}

We now give an algorithm based on the
iterative relaxation paradigm.
Iterative relaxation and related techniques have previously
been applied to degree-constrained spanning tree
problems~\cite{singh_2007_approximating,bansal_2009_additive,bansal_2012_generalizations,zenklusen_2012_matroidal},
and we refer the reader to~\cite{lrs11}
for further details and examples related to this technique.

\begin{enumerate}[nosep]
\item Initialize $Q\gets V$, ${E^*}\gets E$.
\item While $Q\neq \emptyset$
\begin{enumerate}
\item Let ${x}$ denote the optimal extreme point solution to $\lpa$.
\item If there exists an edge $e\in {E^*}$ such that ${x}_e=0$, then ${E^*}\gets {E^*}\setminus \{e\}$.
\item If there exists a vertex $v\in V$ such that
one of the following is satisfied
\begin{align*}
\deg_{E^*}(v) + (2k-1)\cdot |E^*(V_v)| &\leq (2k-1)\cdot \deg_{E}(v)\enspace,
\textrm{ or}\\
\deg_{E^*}(v)&  \leq 2k-1\enspace,
\end{align*}
then $Q\gets Q\setminus \{v\}$.
\end{enumerate}
\item Return the optimal extreme point solution ${x^*}$ to $\lpa$.
\end{enumerate}

\smallskip

Observe that when $Q=\emptyset$, then $\lpa$ optimizes over the convex hull of spanning trees of $G'$, and therefore, $x^*$ is an indicator of a spanning tree $T$ of $G'$. Moreover, the weight of $x^*$ is at most the weight of the initial linear programming solution since we either delete edges with zero fractional value or remove constraints. For any vertex $v\in V$, let $E^*$ denote the fractional solution when $v$ was removed from $Q$.  If $\deg_{E^*}(v) \leq 2k-1$ then it follows that $\deg_{T}(v) \leq 2k-1$ since $T\subseteq E^*$. Moreover, because $|T(V_v)|\leq |V_v|-1=\deg_{E}(v)-1$, following by $T$ being a spanning tree, we have that $\deg_{T}(v) + (2k-1)\cdot |T(V_v)| \leq (2k-1)\cdot \deg_{E}(v)$, showing that the constraint for $v$ will be fulfilled for $2k-1$ even after dropping the constraint in $\lpa$ that corresponds to $v$. In the other case, we also have that $\deg_{T}(v) + (2k-1)\cdot |T(V_v)| \leq (2k-1)\cdot \deg_{E}(v)$ since $T\subseteq E^*$. Thus from Property~\eqref{prop:eqAux}, it follows that $T$ corresponds to a $(2k-1)$-trail of $G$ of weight at most the optimal fractional value proving the theorem. It remains to prove that the algorithm is able to make progress in each iteration, which is shown in the following lemma.

\begin{lemma}\label{lem:counting}
Let $x$ denote the extreme point solution to $\lpa$ with edges $E^*$ and degree constraints for $Q\subseteq V$. Then either there exists an $e\in E^*$  such that $x_e=0$ or a vertex $v\in Q$ such that one of the following is satisfied
\begin{align*}
\deg_{E^*}(v) + (2k-1)\cdot |E^*(V_v)|
  &\leq (2k-1)\cdot \deg_{E}(v)\enspace, \textrm{ or}\\
\deg_{E^*}(v)  &\leq 2k-1\enspace.
\end{align*}
\end{lemma}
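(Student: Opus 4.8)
The plan is to prove Lemma~\ref{lem:counting} by the standard iterative-relaxation counting argument: assume for contradiction that no edge has $x_e=0$ and that no vertex $v\in Q$ satisfies either of the two stated inequalities, and derive a contradiction by a token/fractional-counting argument showing that the number of tight linearly independent constraints at the extreme point $x$ is strictly less than the number of support edges $|E^*|$, contradicting that $x$ is a vertex of $\lpa$. Since $x$ is an extreme point with full support (no $x_e=0$), there must be a set of $|E^*|$ linearly independent tight constraints drawn from the spanning-tree constraints $x(E^*(S))\le |S|-1$ (plus the equality $x(E^*)=|V'|-1$) and the degree constraints for $v\in Q$. By the uncrossing technique for the spanning-tree polytope, the tight set-constraints can be chosen to form a laminar family $\mathcal{L}$, and the counting reduces to comparing $|\mathcal{L}| + |Q|$ against $|E^*|$.

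\textbf{First} I would set up the fractional token assignment. Each edge $e\in E^*$ hands out one token, to be redistributed so that every member of the laminar family $\mathcal{L}$ and every tight degree constraint collects at least one token, with at least one token left over---this surplus is the contradiction. The clean way is to assign each edge's token to the smallest set in $\mathcal{L}$ containing it, and to split a fraction of tokens from edges in $\bar{E}$ incident to $V'_v$ towards the degree constraint at $v$. The key quantitative input is that we are assuming, for every $v\in Q$, both $\deg_{E^*}(v) + (2k-1)\,|E^*(V_v)| > (2k-1)\,\deg_E(v)$ and $\deg_{E^*}(v) > 2k-1$; these two strict inequalities are exactly what is needed to guarantee that the $\bar{E}$-edges and $K_v$-edges around each tight degree vertex carry enough fractional token mass to pay for that vertex's degree constraint while still leaving each laminar set fully paid.

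\textbf{The main obstacle}, and where the argument must be done carefully, is the fractional accounting at the degree constraints. The degree constraint $x(\delta_{E^*}(V_v)) + k\,x(E^*(V'_v)) \le k\deg_E(v)$ mixes $\bar{E}$-edges (the cut $\delta_{E^*}(V_v)$) and $K_v$-edges (inside $E^*(V'_v)$) with different coefficients, so the token each such edge should contribute is not uniform. I expect the right scheme to charge each relevant edge a token fraction proportional to its coefficient in the constraint, and then the two assumed strict inequalities must be combined to show the collected fraction strictly exceeds what the tight laminar sets demand. The endgame is the usual dichotomy: if some edge's tokens are not fully consumed by a minimal laminar set, that excess is the needed slack; otherwise one shows a tight degree vertex receives strictly more than one token, again producing the surplus. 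Establishing that the fractional tokens assigned to each laminar set sum to at least~$1$ (using that each set-constraint is tight and integer-valued with a support of at least two edges, so it is "$\ge 2$ edges giving a combined surplus of at least $1$") and that every tight degree vertex gets at least~$1$, while the global total is exactly~$|E^*|$, is the crux; the contradiction then follows immediately.
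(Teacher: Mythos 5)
Your skeleton---negate both conditions, take an extreme point with full support, uncross to get a laminar family $\mathcal{L}$ and tight degree constraints $Q'\subseteq Q$ with $|\mathcal{L}|+|Q'|=|E^*|$, then count---matches the paper's setup, but the proposal has a genuine gap at exactly the step you call the crux and defer. First, the quantitative heart is never carried out. The paper works with the slack vector $z$, $z_e=1-x_e$, and combines the \emph{tight} degree constraint $x(\delta_{E^*}(V_v))+k\,x(E^*(V_v))=k\deg_E(v)$ with the integrality-strengthened negations $\deg_{E^*}(v)+(2k-1)|E^*(V_v)|\geq (2k-1)\deg_E(v)+1$ and $\deg_{E^*}(v)\geq 2k$ to derive the per-vertex bound $z(\delta_{E^*}(v))+k\,z(E^*(V_v))\geq k$ for every $v\in Q'$; your ``charge each edge proportionally to its coefficient'' scheme is a guess at this computation, not a derivation of it, and nothing in the proposal verifies that the numbers come out.

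Second, and more seriously, your plan is structured to end in a strict token surplus, but no such surplus exists. Summing the bound above over $Q'$, noting each edge is counted at most $k$ times, and using that $z(E^*)=|E^*|-(|V'|-1)$ is forced to equal $|Q'|$ (since $|E^*|=|\mathcal{L}|+|Q'|$ and $|\mathcal{L}|\leq|V'|-1$ for a laminar family of sets of size at least two), the paper arrives at $k|Q'|\geq k\,z(E^*)\geq k|Q'|$: \emph{equality everywhere}, not a contradiction. The counting alone therefore cannot finish the proof; the paper must analyze the equality case---for $k\geq 3$ it forces $z(\delta_{E^*}(v))=0$ and $z(E^*(V_v))=1$ for all $v\in Q'$, and $x_e=1$ for every edge outside $\cup_{v\in Q'}E^*(V_v)$---and then exhibit a \emph{linear dependence}: the vector $\chi\left(\cup_{v\in Q'}E^*(V_v)\right)$ lies in the span of the tight laminar constraints and simultaneously in the span of the degree-constraint vectors together with singleton vectors, contradicting the linear independence guaranteed by uncrossing. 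Your ``endgame dichotomy'' keeps both branches inside token accounting and never invokes linear independence, so as written the argument stalls precisely where equality occurs. You also do not treat $k=2$, which the paper must handle by a separate dependence argument because cut edges are counted only twice (not $k$ times) in the summation, so one cannot conclude $z(\delta_{E^*}(v))=0$ there.
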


The proof of Theorem~\ref{thm:weighted} now follows
from Lemma~\ref{lem:counting}. Later in Section~\ref{sec:integrality-gap},
we show  nearly matching integrality gap examples.
It remains to proof Lemma~\ref{lem:counting}.

\medskip

\begin{proofof}{Lemma~\ref{lem:counting}}
Suppose for sake of contradiction that the conditions in the lemma are not satisfied by some extreme point solution $x$. We consider the set of tight linear constraints defining $x$. Let $\tau=\{S\subseteq V': x(E^*(S))=|S|-1\} $ be the tight constraints from the spanning tree constraints and let $Q_t\subseteq Q$ denote the tight degree constraints. Standard uncrossing arguments imply the following claim (see Lau, Ravi and Singh~\cite{lrs11} Chapter 4 for details).

\begin{claim}
There exists a laminar family $\mathcal{L}\subseteq \tau$ and $Q'\subseteq Q_t$ such that the following hold.
\begin{enumerate}
\item $|\mathcal{L}|+|Q'|=|E^*|$.
\item The vectors $\{\chi(E^*(S)): S\in \lamfam \}\cup \{\chi(\delta_{E^*}(V_v)+k\chi(E^*(V_v))):v\in Q'\}$ are linearly independent.
\item $\{\chi(E^*(S)): S\in \lamfam \}$ span all the tight constraints corresponding to sets in $\mathcal{L}$ and $\{\chi(E^*(S)): S\in \lamfam \}\cup \{\chi(\delta_{E^*}(V_v)+k\chi(E^*(V_v))):v\in Q'\}$ span all the tight constraints corresponding to sets in $\mathcal{L}$ and vertices in $Q_t$.
\end{enumerate}
\end{claim}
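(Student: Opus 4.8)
The statement to prove is the uncrossing claim establishing a laminar family $\mathcal{L}$ and vertex set $Q'$ with the three listed properties. This is the standard rank-counting backbone of every iterative-rounding argument, so the plan is to follow the template of Lau--Ravi--Singh closely, adapted to the auxiliary graph $G'$ and its two families of constraints (spanning-tree set constraints and the combined degree constraints).

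\textbf{Setup.} First I would recall that since $x$ is an extreme point of $\lpa$, it is the unique solution of the subsystem of tight constraints, so the number of variables equals the rank of the tight constraint matrix; that is, $|E^*|$ equals the dimension of the span of all tight constraint vectors. Because we are assuming (for contradiction) that no edge satisfies $x_e=0$, the nonnegativity constraints contribute nothing, so every tight constraint is either a spanning-tree set constraint indexed by some $S\in\tau$ or a degree constraint indexed by some $v\in Q_t$. Hence the span of $\{\chi(E^*(S)):S\in\tau\}\cup\{\chi(\delta_{E^*}(V_v))+k\,\chi(E^*(V_v)):v\in Q_t\}$ has dimension exactly $|E^*|$.

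\textbf{Uncrossing the set constraints.} The core step is the classical submodular/supermodular uncrossing applied to the tight spanning-tree constraints. Using the fact that $x\mapsto x(E^*(S))$ satisfies the supermodular-type inequality $x(E^*(S))+x(E^*(T))\le x(E^*(S\cap T))+x(E^*(S\cup T))$ together with $\chi(E^*(S))+\chi(E^*(T))=\chi(E^*(S\cap T))+\chi(E^*(S\cup T))$ whenever $S,T$ are both tight and crossing, one shows that if $S,T\in\tau$ cross then $S\cap T$ and $S\cup T$ are also tight and $\chi(E^*(S\cap T)),\chi(E^*(S\cup T))$ lie in the span of $\chi(E^*(S)),\chi(E^*(T))$. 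Iterating, I obtain a laminar subfamily $\mathcal{L}\subseteq\tau$ whose characteristic vectors span the same subspace as all of $\{\chi(E^*(S)):S\in\tau\}$ and are linearly independent; this yields the first half of property (3). The only subtlety is the single equality constraint $x(E^*)=|V'|-1$, which corresponds to $S=V'$ and is handled uniformly within this uncrossing since $V'$ crosses nothing.

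\textbf{Adding the degree constraints and counting.} Finally I would extend $\mathcal{L}$ by greedily selecting a maximal subset $Q'\subseteq Q_t$ of degree constraints whose vectors $\chi(\delta_{E^*}(V_v))+k\,\chi(E^*(V_v))$ are linearly independent modulo the span of $\{\chi(E^*(S)):S\in\mathcal{L}\}$; maximality gives that $\{\chi(E^*(S)):S\in\mathcal{L}\}\cup\{\chi(\delta_{E^*}(V_v))+k\,\chi(E^*(V_v)):v\in Q'\}$ spans all tight constraints (property (3)) and is linearly independent (property (2)). Since these vectors span a space of dimension $|E^*|$ and are independent, their number equals $|E^*|$, giving $|\mathcal{L}|+|Q'|=|E^*|$ (property (1)). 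The main obstacle is verifying the uncrossing inequalities cleanly for the combined degree form $\chi(\delta_{E^*}(V_v))+k\,\chi(E^*(V_v))$; but since these degree constraints are indexed by the disjoint vertex blocks $V_v$ and are not themselves uncrossed (only the set constraints are), the interaction is mild, and the argument reduces to the standard matrix-rank bookkeeping of Chapter 4 of~\cite{lrs11}, which is why the claim can be asserted with a reference rather than reproved from scratch.
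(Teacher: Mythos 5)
Your proposal is correct and takes essentially the same route as the paper: the paper disposes of this claim by appealing to the standard uncrossing arguments of Chapter 4 of~\cite{lrs11}, and your write-up simply reconstructs that standard argument (extreme-point rank counting with no tight nonnegativity constraints, uncrossing the tight spanning-tree constraints into a laminar family with the same span, then greedily extending by degree constraints that are independent modulo that span). No gaps worth flagging.
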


We now show a contradiction by a counting argument. Observe that we have $|\lamfam|\leq |V'|-1$ since $\mathcal{L}$ is a laminar family over a ground set of size $|V'|$ and includes sets of size at least two. Let $z\in \RR_{\geq 0}^{E^*}$ be the slack vector defined by $z_e=1-x_e$ for any edge $e\in E^*$.
Hence, using the fact that $x(E^*)=|V'|-1$,
the total slack is bounded by
\begin{equation*}
z(E^*) = |E^*|-x(E^*) \leq |\mathcal{L}| + |Q'| - |V'| + 1 \leq |Q'|\enspace.
\end{equation*}
Since we assume the conditions of the lemma are not satisfied and constraints for vertices in $Q'$ are tight, the following inequality and equations follow
for any $v\in W'$,  where we use that $z_e+x_e=1$ for each edge $e$:

\begin{align*}
\frac{k}{2k-1} \left(\deg_{E^*}(v) + (2k-1)\cdot|E^*(V_v)| \right) &\geq \frac{k}{2k-1} \left((2k-1)\cdot \deg_{E}(v)+1\right)\enspace, \text{and} \\
-\left(x(\delta_{E^*}(v)) + k\cdot x(E^*(V_v))\right)& =- k\cdot \deg_{E}(v)\enspace, \text{and}\\
\frac{k-1}{2k-1}(x(\delta_{E^*}(v))+z(\delta_E{^*}(v)))&=
\frac{k-1}{2k-1}\deg_{E^*}(v)\enspace.
\end{align*}
Adding up the left-hand sides and right-hand sides of the
three relations above,
we obtain that for each $v\in Q'$, we have

\begin{align*}
z(\delta_{E^*}(v)) + k\cdot z(E^*(V_v)) &\geq \frac{k-1}{2k-1}|\delta_{E^*}(v))|+  \frac{k}{2k-1}\enspace.
\end{align*}
Since we assume that none of the two
conditions of Lemma~\ref{lem:counting}
holds, we have $\deg_{E^*}(v) \geq 2k$, and thus
\begin{align*}
z(\delta_{E^*}(v)) + k\cdot z(E^*(V_v)) &\geq \frac{k-1}{2k-1}\cdot 2k+  \frac{k}{2k-1}=k\enspace.
\end{align*}
Summing over vertices $v\in Q'$, we get
\begin{align*}
\sum_{v\in Q'}\left(z(\delta_{E^*}(v)) + k\cdot z(E^*(V_v)) \right) &\geq k|Q'|\enspace.
\end{align*}
Since $E^*(V_v)$ are disjoint for different vertices $v$, every edge is counted at most $k$ times in the LHS and therefore it is at most $kz(E^*)$. Moreover, we have $$z(E^*)=|E^*|-x(E^*)=|E^*|-|V'|+1\geq |E^*|-|\lamfam|=|Q'|.$$ We obtain that

\begin{align*}
k|Q'|\geq kz(E^*) &\geq k|Q'| \\
\end{align*}

Thus we must have equality everywhere and, unless $k=2$, we must have $z(\delta_{E^*}(v))=0$ for each $v\in Q'$ since these edges were counted at most twice. First assume $k\geq 3$ and thus we have $z(E^*(V_v))=1$ for each $v\in Q'$. Since $E^*(V_v)$ are disjoint, this implies that $z(E^*)\geq z(\left(\cup_{v\in Q'} E^*(V_v)\right))= |Q'|$ which is contradiction unless $z_e=0$ for each $e\in E^*\setminus \left(\cup_{v\in Q'} E^*(V_v)\right)$. Now, we show a linear dependence. Since $x_e=1$ for each $e\in E^*\setminus \left(\cup_{v\in Q'} E^*(V_v)\right)$, we have that $\chi(\{e\})$ is in the span of tight constraints in $\lamfam$. Subtracting these vectors from $\chi(E^*)$ we obtain that $\chi\left(\cup_{v\in Q'} E^*(V_v)\right)$ is in the span of tight constraints in $\tau$ which includes the vector $\chi(E^*)$. But this vector is also in the span of  $\{\chi(\delta_{E^*}(V_v)+k\chi(E^*(V_v))):v\in Q'\}$ and $\{\chi(\{e\}):e\in \cup_{v\in Q'} E^*(V_v)\}$ which is a contradiction.

In the case when $k=2$, we let $R=E^*\setminus \left(\cup_{v\in Q'} E^*(V_v)\right)\setminus \left(\cup_{v\in Q'} \delta_{E^*}(V_v)\right)$. We must have that $x_e=1$ for each $e\in R$. But summing $\{\chi(\delta_{E^*}(V_v)+2\chi(E^*(V_v))):v\in Q'\}$ we obtain $2\chi(E^*\setminus R)=2\chi(E^*)-2\chi(R)$. Since $\chi(\{e\})$ for each $e\in R$, and $\chi(E^*)$ is in the span of tight constraints in $\tau$ and therefore, $\lamfam$, we obtain a contradiction. This completes the proof of the lemma and Theorem~\ref{thm:weighted}.
\end{proofof}

\subsection{Integrality Gap Example}\label{sec:integrality-gap}

In this section, we give an integrality gap example for the linear program $\lpa$. First observe that for $k=2$, Theorem~\ref{thm:weighted} returns a 3-trail and therefore is optimal. Now consider any integer $k\geq 3$.  We show that there is an instance such that $G$ contains no $(2k-3)$-trail but the linear program is feasible for $k$. In this case, the cost function is non-negative. We also show that the if the cost function is allowed to take negative values, then any trail whose cost is at most the cost of the linear programming solution must be a $(2k-2)$-trail.

Consider the cycle $v_1,\ldots, v_n$ where we have two parallel edges between $v_i$ and $v_{i+1}$ for each $2\leq i\leq n-1$. Thus, $v_1$ has degree two, $v_2$ and $v_n$ have degree three and every other vertex has degree four. For each $1\leq i\leq n$, we add $2k-1-\deg(v)$ additional distinct vertices, say $R_i=\{w_i^1,\ldots, w_i^{b_i}\}$, and there is an edge $(w,v_i)$ for each $w\in R_i$. Thus, the degree of each vertex is now exactly $2k-1$.
Figure~\ref{fig:intGap} summarizes the construction.

\begin{figure}[h]
\begin{center}
\begin{tikzpicture}[scale=1]

\tikzstyle{sn}=[thick, draw=black, fill=white, circle, minimum size=24,inner sep=1pt, font=\small]
\tikzstyle{pn}=[thick, draw=black, fill=white, circle, minimum size=4,inner sep=1pt, font=\small]

\def\ang{35}
\def\r{3}
\def\fanr{1.5}

\newcommand\fan[2]{
\foreach \j in {-3,-2,-1,3} {
\node[pn] (#1\j) at ($(#1) + (#2-8*\j:\fanr)$) {};
\draw[thick] (#1) -- (#1\j);
}
\draw[dotted,thick] (#1) ++ (#2:1.5) arc (#2:#2-8*2:\fanr);
}

\newcommand\fanbr[1]{
\begin{scope}[rotate around={#1*\ang:(0,0)}]
\draw[decorate, decoration={brace,amplitude=5},thick] (90:\r+\fanr+0.1) ++ (-0.8,0) -- ++(1.6,0);
\coordinate (b#1) at (90: \r+\fanr+0.5);
\end{scope}
}

\begin{scope}

\begin{scope}[every node/.style={sn}]
\foreach \i/\t in {-3/4,-2/3,-1/2,0/1,1/n,2/n-1,3/n-2} {
\node (\i) at (90-\ang*\i : \r) {$v_{\t}$};
\fan{\i}{90-\ang*\i}
}
\end{scope}

\def\bv{15}

\begin{scope}[thick]
\draw (0) -- (-1);
\draw (0) -- (1);
\draw (1) to[bend right=\bv] (2);
\draw (1) to[bend left=\bv] (2);
\draw (2) to[bend right=\bv] (3);
\draw (2) to[bend left=\bv] (3);
\draw (-2) to[bend right=\bv] (-1);
\draw (-2) to[bend left=\bv] (-1);
\draw (-3) to[bend right=\bv] (-2);
\draw (-3) to[bend left=\bv] (-2);

\draw[dotted] (90+3.5*\ang:1*\r) arc (90+3.5*\ang : 90-3.5*\ang+360 : 1*\r);

\end{scope}

\begin{scope}
\foreach \i in {-3,...,3} {
\fanbr{\i}
}
\end{scope}

\begin{scope}
\node at (b0) {$2k-3$ many};
\node[left=-4] at (b1) {$2k-4$};
\node[right=-4] at (b-1) {$2k-4$};
\node[left=-4] at (b2) {$2k-5$};
\node[right=-4] at (b-2) {$2k-5$};
\node[left=-4] at (b3) {$2k-5$};
\node[right=-4] at (b-3) {$2k-5$};
\end{scope}

\end{scope}

\end{tikzpicture}

\end{center}
\caption{The construction of our integrality
gap example. Each vertex has degree $2k-1$.
Apart from vertices $v_1$, $v_2$, and $v_n$,
all other vertices have the same number of
attached pending edges, namely $2k-5$.}\label{fig:intGap}
\end{figure}

We consider two cost functions, first when all costs are $-1$. We claim there is a feasible fractional solution with degree $k$. We select all the edges of the graph $G$. It is easy to check that $\mu$ is a feasible split vector if it has $n-1$ ones and $0$ once.
Thus, the degree of each vertex is $2k-1$ if it split and $k$ or $k-1$ if it is not split. In the uniform convex combination of these integral solutions,
each vertex of the will be split fractional $1-\frac{1}{n}$ units.
In other words, the corresponding solution to $\lpa$ satisfies
$\deg(v)-1 - x(E(V'_v)) = 1-\frac{1}{n}$ for each ring-vertex $v$.
One can easily verify that this implies feasibility for $\lpa$ with
parameter $k$ when $n\geq k$.
Since the weights are $-1$ for each edge, we must include each edge in the integral solution to obtain a weight no higher than the fractional solution. Since $\mu=(1,\ldots,1)$ is not a feasible vector, we will not be able to split each vertex. But any vertex that is not split must have degree $2k-1$, giving the desired lower bound.

Now, if we restrict the weights to non-negative values, we consider the weight function which puts one on each edge. A simple check shows that no $(2k-3)$-trail is contained in $G$.
Thus, for nonnegative weights, it remains open whether one can efficiently
compute a $(2k-1)$-trail whose weight is no more than the weight of a
cheapest solution of $\lpa$ for degree $k$.

\paragraph{\textbf{Acknowledgements.}}

We are grateful to Michel Goemans, Anupam Gupta,
Neil Olver, and Andr\'as Seb\H o
for inspiring discussions.
This research project started while both
authors were guests at the
Hausdorff Research Institute for Mathematics (HIM),
during the 2015 trimester on Combinatorial
Optimization. Both authors are very thankful to the
generous support and inspiring environment
provided by the HIM and the organizers of
the trimester program.

\appendix

\end{document}